\newcommand{\ie}{{i.e.}\xspace}
\newcommand{\etc}{{etc.}\xspace}
\newcommand{\secref}[1]{Sec.~\ref{sec:#1}}
\newcommand{\figref}[1]{Fig.~\ref{fig:#1}}
\newcommand{\tabref}[1]{Table~\ref{tab:#1}}
\newcommand{\ignore}[1]{}
\newtheorem{thm}{Theorem}[section]
\newcommand{\Nat}{\mathbb{N}}
\newcommand{\Reals}{\mathbb{R}}
\newcommand{\mtt}[1]{\mathtt{#1}}
\newcommand{\cP}{{\cal P}}
\newcommand{\cB}{{\cal B}}
\newcommand{\wcP}{\widehat{\cP}}
\newcommand{\wcB}{\widehat{\cB}}
\newcommand{\A}{{\cal A}}
\newcommand{\Pred}{Pred}
\newcommand{\pred}{\phi}
\newcommand{\Var}{V}
\newcommand{\var}{b}
\newcommand{\gVar}{GV}
\newcommand{\fVar}{FV}
\newcommand{\lVar}{LV}
\newcommand{\varname}{v}
\newcommand{\vname}{\varname}
\newcommand{\conc}{\gamma}
\newcommand{\Conc}{\Gamma}
\newcommand{\f}{F}
\newcommand{\cL}{{\cal L}}
\newcommand{\wcL}{\widehat{\cL}}
\newcommand{\cG}{{\cal G}}
\newcommand{\Stmt}{S}
\newcommand{\C}{{\cal C}}
\newcommand{\Cost}{{Cost}}
\newcommand{\cut}{{\ell}}
\newcommand{\cs}{{\cut}}
\newcommand{\ce}{{\cut'}}
\newcommand{\ch}{\grave{\lambda}}
\newcommand{\ci}{{\lambda}}
\newcommand{\cj}{{\lambda'}}
\newcommand{\stmt}{stmt}
\newcommand{\wstmt}{\widehat{\stmt}}
\newcommand{\Stmttype}{\Sigma}
\newcommand{\stmttype}{\sigma}
\newcommand{\wstmttype}{\widehat{\stmttype}}
\newcommand{\type}{\tau}
\newcommand{\ofp}{(\cP)}
\newcommand{\inscope}{inscope}
\newcommand{\entry}{entry}
\newcommand{\exit}{exit}
\newcommand{\N}{N}
\newcommand{\R}{E}
\newcommand{\VC}{VC}
\newcommand{\CRC}{CRC}
\newcommand{\cT}{{\cal T}}
\newcommand{\id}{id}
\newcommand{\funcDefinedBy}[1]{\ensuremath{\llbracket #1\rrbracket}}
\newcommand{\pc}{\rho}
\newcommand{\E}{{\xi}}
\newcommand{\identity}{id}
\newcommand{\cI}{{\cal I}}
\newcommand{\I}{{\cal I}}
\newcommand{\U}{{\cal U}}
\newcommand{\cU}{{\cal U}}
\newcommand{\update}{\mathbb{R}_{\U,\cL}}
\newcommand{\upd}{\mathbb{R}}
\newcommand{\temp}{\mathbb{E}_{\cT,\cL}}
\newcommand{\tel}{\mathbb{E}(\ell)}
\newcommand{\cE}{{\cal E}}
\newcommand{\cost}{c_{\U,\cL}}
\newcommand{\target}{{\delta}}
\newcommand{\st}{\varsigma}
\newcommand{\stack}{{\zeta}}
\newcommand{\setof}[2]{\{{#1}_1,\ldots,{#1}_{#2}\}}
\newcommand{\procind}{t}
\newcommand{\predind}{r}
\newcommand{\formalind}{k}
\newcommand{\assind}{m}
\newcommand{\upind}{d}
\newcommand{\la}{{\langle}}
\newcommand{\ra}{{\rangle}}
\newcommand{\en}[1]{\la{#1}\ra}
\newcommand{\assign}{\mathtt{:=}}
\newcommand{\true}{{\tt true}}
\newcommand{\false}{{\tt false}}
\newcommand{\tran}[1]{\xrightarrow{#1}}
\renewcommand{\labelitemi}{$\bullet$}
\newcommand{\err}{err}
\newcommand{\vs}{\varsigma}
\begin{document}
\title{Cost-Aware Automatic Program Repair}
\author{Roopsha Samanta\inst{1}, Oswaldo Olivo\inst{2}, \and E. Allen Emerson\inst{2}}
\institute{The University of Texas at Austin and IST Austria\\
\email{roopsha@ist.ac.at} \and
The University of Texas at Austin \\
\email{\{olivo,emerson\}@cs.utexas.edu}}

\maketitle

\begin{abstract}

We present a formal framework for repairing infinite-state, imperative,
sequential programs, with (possibly recursive) procedures and multiple
assertions; the framework can generate repaired programs by modifying the
original erroneous program in multiple program locations, and can ensure the
readability of the repaired program using user-defined expression templates;
the framework also generates a set of inductive assertions that serve as a
proof of correctness of the repaired program.  As a step toward integrating
programmer intent and intuition in automated program repair, we present a {\em
cost-aware} formulation --- given a cost function associated with permissible
statement modifications, the goal is to ensure that the total program
modification cost does not exceed a given repair budget. As part of our
predicate abstraction-based solution framework, we present a sound and complete
algorithm for repair of Boolean programs.  We have developed a prototype tool
based on SMT solving and used it successfully to repair diverse errors in
benchmark C programs. 

\end{abstract}

\section{Introduction}\label{sec:intro}
Program debugging --- the process of fault localization and error elimination
--- is an integral part of ensuring correctness in existing or evolving
software. Being essentially manual, program debugging is often a lengthy,
expensive part of a program's development cycle. There is an evident need for
improved formalization and mechanization of this process. However, program
debugging is hard to formalize --- there are multiple types of programming
mistakes with diverse manifestations, and multiple ways of eliminating a detected
error. Moreover, it is particularly challenging to assimilate and mechanize the
expert human intuition involved in the choices made in manual program
debugging.

In this paper, we present a {\em cost-aware} formulation of the automated
program debugging problem that addresses the above concerns. Our formulation
obviates the need for a separate fault localization phase by directly focusing
on error elimination, \ie, program repair. We fix a set $\U$ of {\em update
schemas} that may be applied to program statements for modifying them.  An
update schema is a compact description of a class of updates that may be
applied to a program statement in order to repair it. For instance, the update
schema $\mtt{assign} \mapsto \mtt{assign}$ permits replacement of the
assignment statement $x \, \assign \, y$ with other assignment statements such
as $x \, \assign \, x + y$ or $y \, \assign \, x + 1$, $\mtt{assign} \mapsto
\mtt{skip}$ permits deletion of an assignment statement, etc. In this paper,
$\U$ includes deletion of statements, replacement of assignment statements with
other assignment statements, and replacement of the guards of conditional and
loop statements with other guards. We assume we are given a {\em cost function}
that assigns some user-defined cost to each application of an update schema to
a program statement. Given an erroneous program $\cP$, a cost function $c$ and
a repair budget $\target$, the goal of {\em cost-aware automatic program
repair} is to compute a program $\wcP$ such that: $\wcP$ is correct, $\wcP$ is
obtained by modifying $\cP$ using a set of update schemas from $\U$ and the
total modification cost does not exceed $\target$. We postulate that this {\em
quantitative} formulation \cite{BCHJ09} is a flexible and convenient way of
incorporating user intent and intuition in automatic program debugging.  For
instance, the user can define appropriate cost functions to search for $\wcP$
that differs from $\cP$ in at most $\target$ statements, or to penalize any
modification within some {\em trusted} program fragment, or to favor the
application of a particular update schema over another, and so on. 

Our approach to cost-aware repair of imperative, sequential programs is based
on predicate abstraction \cite{GS97}, which is routinely used by verification
tools such as SLAM \cite{BR01}, SLAM2 \cite{BBKL10}, SATABS \cite{CKSY05}, etc.
for analyzing infinite-state programs.  These tools generate Boolean programs
which are equivalent in expressive power to pushdown systems and enjoy
desirable computational properties such as decidability of reachability
\cite{BR00}.  Inevitably, Boolean programs have also been explored for use in
automatic repair of sequential programs for partial correctness \cite{GBC06}
and total correctness \cite{SDE08}.  These papers, however, do not accommodate
a quantitative formulation of the repair problem and can only compute repaired
programs that differ from the original erroneous program in exactly one
expression. Moreover, these papers do not attempt to improve the {\em
readability} of the concrete program $\wcP$, obtained by concretizing a
repaired Boolean program. 

Our predicate abstraction-based approach to automatic program repair relaxes
the above limitations. Besides erroneous $\cP$, $c$, and $\target$, our
framework requires a Boolean program $\cB$, obtained from $\cP$ through
iterative predicate abstraction-refinement, such that $\cB$ exhibits a
non-spurious path to an error.  We present an algorithm which casts the
question of {\em repairability of $\cB$}, given $U$, $c$, and $\target$, as an
SMT query;  if the query is satisfiable, the algorithm extracts a correct
Boolean program $\wcB$ from the witness to its satisfiability. Along with
$\wcB$, we also extract a set of inductive assertions from the witness, that
constitute a proof of correctness of $\wcB$. This algorithm for Boolean program
repair is sound and complete, relative to $\U$, $c$, and $\target$.  A repaired
Boolean program $\wcB$, along with its proof, is concretized to obtain a
repaired concrete program $\wcP$, along with a proof of correctness.  However,
the concretized repairs may not be succinct or readable.  Hence, our framework
can also accept user-supplied templates specifying the desired syntax of the
modified expressions in $\wcP$ to constrain the concretization. 
 
Alternate approaches to automatic repair and synthesis of sequential programs
\cite{SRBE05,STBSS06,SGF10,KB11} that do not rely on abstract interpretations
of concrete programs, also often encode the repair/synthesis problem as a
constraint-solving problem whose solution can be extracted using SAT or SMT
solvers.  Except for \cite{SGF10}, these approaches, due to their bounded
semantics, are imprecise and cannot handle total correctness\footnote{Our
framework can be extended to handle total correctness by synthesizing ranking
functions along with inductive assertions.}. The authors in \cite{KB11} use SMT
reasoning to search for repairs satisfying user-defined templates; the
templates are needed not only for ensuring readability of the generated
repairs, but also for ensuring tractability of their inherently undecidable
repair generation query.  They also include a notion of minimal diagnoses,
which is subsumed by our more general cost-aware formulation.  Given
user-defined constraints specifying the space of desired programs and
associated proof objects, the scaffold-based program synthesis approach of
\cite{SGF10} attempts to synthesizes a program, along with a proof of total
correctness consisting of program invariants and ranking functions for loops.
In contrast to \cite{SGF10}, our framework only interacts with a user for
improving the readability of the generated repairs and for the cost function;
all predicates involved in the generation of the repaired Boolean program and
its proof are discovered automatically.  Besides the above, there have been
proposals for program repair based on computing repairs as winning strategies
in games \cite{JGB05}, abstraction interpretation \cite{LB12}, mutations
\cite{DW10}, genetic algorithms \cite{Arcuri08,LDFW12}, using contracts
\cite{WPFSBMZ10}, and focusing on data structure manipulations
\cite{SL11,ZGKM12}. There are also customized program repair engines for
grading and feedback generation for programming assignments, cf. \cite{SGS13}.
Finally, a multitude of algorithms \cite{ZH02,BNR03,JM11,CTBB11} have been
proposed for fault localization, based on analyzing error traces.  Some of
these techniques can be used as a preprocessing step to improve the efficiency
of our algorithm, at the cost of giving up on the completeness of the Boolean
program repair module. \\  

\noindent {\em Summary of contributions}: We define a new cost-aware
formulation of automatic program repair that can incorporate programmer
intuition and intent (\secref{prob}). We present a formal solution framework
(\secref{algo1} and \secref{algo2}) that can repair infinite-state, imperative,
sequential programs with (possibly recursive) procedures and multiple
assertions. Our method can modify the original erroneous program in multiple
program locations and can ensure the readability of the repaired program using
user-defined expression templates. If our method succeeds in generating a
repaired program $\wcP$, it generates a proof of $\wcP$'s correctness,
consisting of inductive assertions, that guarantees satisfaction of {\em all}
the assertions in the original program $\cP$. As part of our predicate
abstraction-based solution, we present a sound and complete algorithm for
repair of Boolean programs. Finally, we present experimental results for
repairing diverse errors in benchmark C programs using a prototype
implementation (\secref{expt}). 





\section{Background}\label{sec:prelims}
\noindent{\bf Review: Predicate Abstraction}.  Predicate abstraction
\cite{GS97,BR00} is an effective approach for model checking infinite-state
imperative programs with respect to safety properties. This technique computes
a finite-state, {\em conservative} abstraction of a concrete program $\cP$ by
partitioning $\cP$'s state space based on the valuation of a finite set $\Pred
= \setof{\pred}{\predind}$ of predicates. The resulting abstract program is
termed a {\em Boolean program} $\cB$ (see \figref{runex.P} and \figref{runex.B}): 
the control-flow of $\cB$ is the same as
that of $\cP$ and the set $\Var = \setof{\var}{\predind}$ of variables of $\cB$
are Boolean variables, where for each $i \in [1,\predind]$, the Boolean
variable $\var_i$ represents the predicate $\pred_i$.  Given a concrete program
$\cP$, the overall {\em counterexample-guided abstraction refinement} method
proceeds as follows.  In step one, an initial Boolean program $\cB$ is computed
and in step two, $\cB$ is model-checked with respect to its specification. If
$\cB$ is found to be correct, the  method concludes that $\cP$ is correct.
Otherwise, an abstract counterexample path leading to some violated assertion
in $\cB$ is computed and examined for feasibility in $\cP$. If found feasible,
the method terminates, reporting an error in $\cP$.  If found infeasible, in
step three, $\cB$ is refined into a new Boolean program $\cB'$ that eliminates
the spurious counterexample. Thereafter, steps two and three are repeated, as
needed. Note that the overall method is incomplete - it may not always be able
to possible to compute a suitable refinement that eliminates a spurious
counterexample or to check if an abstract counterexample is indeed spurious. 

\begin{figure}[ht!b]
\centering
\subfloat[$\cP$]{
\label{fig:runex.P}
\begin{minipage}{0.33\linewidth}
\footnotesize{
{\tt
$\mtt{main}()$ \{\\
\hspace*{3mm} $\mtt{int} \; x$;\\
\hspace*{3mm} $\ell_1:$ $\mtt{if}$ $(x\leq0)$\\
\hspace*{3mm} $\ell_2:$ \hspace*{3mm} $\mtt{while}$ $(x < 0)$\{\\
\hspace*{3mm} $\ell_3:$ \hspace*{6mm} $x \; \assign \; x + 2$;\\
\hspace*{3mm} $\ell_4:$ \hspace*{6mm} $\mtt{skip}$;\\
\hspace*{15mm} $\}$\\
\hspace*{10mm} $\mtt{else}$\\
\hspace*{3mm} $\ell_5:$ \hspace*{3mm} $\mtt{if}$ $(x==1)$\\
\hspace*{3mm} $\ell_6:$ \hspace*{6mm} $x \; \assign \; x - 1$;\\
\hspace*{3mm} $\ell_7:$ $\mtt{assert}$ $(x>1)$;\\
$\}$
}}
\end{minipage}}
\subfloat[$\cB$]{
\label{fig:runex.B}
\begin{minipage}{0.68\linewidth}
\footnotesize{
{\tt
$\mtt{main}()$ \{\\
\hspace*{3mm} $/* \conc(b_0)\, = x \leq 1, \; \conc(b_1)\, =\, x==
                  1,\; \conc(b_2) \,=\, x \leq 0 */$\\
\hspace*{3mm} $\mtt{Bool} \; b_0,b_1,b_2 \; \assign \; *,*,*$;\\
\hspace*{3mm} $\ell_1:$ $\mtt{if}$ $(\neg b_2)$ $\mtt{then}$ $\mtt{goto} \;
\ell_5$; \\
\hspace*{3mm} $\ell_2:$ $\mtt{if}$ $(*)$ $\mtt{then}$
$\mtt{goto} \; \ell_0$; \\
\hspace*{3mm} $\ell_3:$ $b_0,b_1,b_2 \; \assign \; *,*,*$;\\
\hspace*{3mm} $\ell_4:$ $\mtt{goto} \; \ell_1$;\\
\hspace*{3mm} $\ell_0:$ $\mtt{goto} \; \ell_7$;\\
\hspace*{3mm} $\ell_5:$ $\mtt{if}$ $(\neg b_1)$ $\mtt{then}$ $\mtt{goto}
\; \ell_7$;\\
\hspace*{3mm} $\ell_6:$ $b_0,b_1,b_2 \; \assign \; *,*,*$;\\
\hspace*{3mm} $\ell_7:$ $\mtt{assert}$ $(\neg b_0)$;\\
$\}$
}}
\end{minipage}}

\subfloat[$\cG(\cB)$]{
\label{fig:runex.BTG}
\begin{minipage}{0.8\linewidth}
\begin{tikzpicture}[->]
\footnotesize{
    \tikzstyle{state}=[draw,circle,minimum size=8mm,thick,inner sep=1pt,text=black]
    \tikzstyle{lbl}=[font=\fontsize{9}{9}\selectfont,inner
        sep=1pt,minimum height=2mm]
    
    \node[state,fill=blue!30]              (s1) {$\ell_1$};
    \node[state,fill=blue!30,node distance=12mm, below of=s1,xshift=-30mm] (s2)
    {$\ell_2$};
    \node[state, node distance=15mm, below of=s2] (s3) 
    {$\ell_3$};
    \node[state, node distance=15mm, below of=s3] (s4) {$\ell_4$};

    \node[state, node distance=25mm, below of=s1] (s0) {$\ell_0$};

    \node[state, node distance=12mm, below of=s1,xshift=30mm] (s5)
    {$\ell_5$};
    \node[state, node distance=15mm, below of=s5,xshift=0mm] (s6)
    {$\ell_6$};

    \node[state,fill=blue!30,node distance=40mm, below of=s1] (s7) {$\ell_7$};
    \node[state, node distance=12mm, below of=s7,xshift=-10mm] (s8)
    {$err$};
    \node[state,fill=blue!30,node distance=12mm, below of=s7,xshift=10mm] (s9)
    {$exit$};

    \path[->] (s1) edge node [lbl,above left] {$\mtt{assume} \, (b_2)$} (s2);
    \path[->] (s2) edge node [lbl,left] {$\mtt{assume} \, (\true)$} (s3);
    \path[->] (s3) edge node [lbl,left] {$b_0,b_1,b_2 \; \assign \;
    *,*,*$} (s4);
    \path[->] (s4) edge[bend right] (s2);

    \path[->] (s1) edge node [lbl,above right]{$\mtt{assume} \, (\neg b_2)$} (s5);
    \path[->] (s5) edge node [lbl,right]{$\mtt{assume} \, (b_1)$} (s6);

    \path[->](s2) edge node [lbl,above,sloped] {$\mtt{assume} \,
    (\true)$} (s0);
    \path[->] (s0) edge (s7);
    \path[->] (s5) edge node [lbl,above,sloped,midway] {$\mtt{assume}
    \,(\neg b_1)$} (s7);
    \path[->] (s6) edge node[lbl,below right] {$b_0,b_1,b_2 \; \assign \; *,*,*$} (s7);

    \path[->] (s7) edge (s8);
    \path[->] (s7) edge (s9);
}
\end{tikzpicture}
\end{minipage}}
\caption{An example concrete program $\cP$, a corresponding Boolean
program $\cB$ and $\cB$'s transition graph}
\label{fig:runexPBTG}
\end{figure}

In our work, the interesting case is when the method terminates reporting an
error. Henceforth, we fix a concrete program $\cP$, and a corresponding Boolean
program $\cB$ that exhibits a non-spurious counterexample path.  Let 
$\setof{\pred}{\predind}$ denote the set of predicates used in the abstraction
of $\cP$ into $\cB$, where each predicate is a quantifier-free first order
expression over the variables of $\cP$.  Let $\setof{\var}{\predind}$
denote the corresponding Boolean variables of $\cB$.  Let $\conc$ denote the
mapping of Boolean variables to their respective predicates: for each $i \in
[1,\predind]$, $\conc(\var_i) = \pred_i$.  The mapping $\conc$ can be extended
in a standard way to expressions over the Boolean variables in $\Var$.\\

\noindent {\bf Program Syntax}.  For our technical presentation, 
we fix a common, simplified syntax for sequential concrete and
abstract programs. A partial definition of this syntax is shown in
\figref{syntax}. In the syntax, $\varname$ denotes a variable, $\en{type}$
denotes the type of a variable, $\f$ denotes a procedure, $\ell$ denotes a
statement label or location, $\en{expr}$ denotes a well-typed expression, and
$\en{bexpr}$ denotes a Boolean-valued expression.

\begin{figure}
\begin{center}
\begin{tabular}[t]{|llp{8.7cm}|}
\hline
$\en{pgm}$ & $::=$ & $\en{vardecl} \, \en{proclist}$\\
$\en{vardecl}$ & $::=$ & $\mtt{decl} \; \varname :  \en{type} \bm{;}$ $|$
$\en{vardecl} \, \en{vardecl}$\\

$\en{proclist}$ & $::=$ & $\en{proc} \, \en{proclist}$ $|$ $\en{proc}$\\

$\en{proc}$ & $::=$ & $\f(\varname_1,\ldots,\varname_\formalind) \; \mtt{begin} \, \en{vardecl}
\, \en{stmtseq} \, \mtt{end}$\\

$\en{stmtseq}$ & $::=$ & $\en{labstmt} \, \bm{;} \, \en{stmtseq}$\\

$\en{labstmt}$ & $::=$ & $\en{stmt}$ $|$ $\ell: \en{stmt}$\\

$\en{stmt}$ & $::=$ & $\mtt{skip}$  $|$ $\varname_1,\ldots, \varname_m \, \bm{\assign} \, \en{expr_1},\ldots,\en{expr_m}$
\newline
$|$ $\mtt{if}\, (\en{bexpr}) \, \mtt{then} \, \en{stmtseq} \,
\mtt{else} \, \en{stmtseq} \, \mtt{fi}$ 
\newline 
$|$ $\mtt{while} \, (\en{expr}) \, \mtt{do} \, \en{stmt} \, \mtt{od}$
$|$ $\mtt{assume} \, (\en{bexpr})$
\newline
$|$ $\mtt{call} \, \f(\en{expr_1},\ldots,\en{expr_\formalind})$
$|$ $\mtt{return}$
\newline
$|$ $\mtt{goto} \; \ell_1\, \mtt{or} \ldots \mtt{or}\, \ell_n$ $|$ 
$\mtt{assert} \,  (\en{bexpr})$\\

\hline
\end{tabular}
\end{center}
\caption{Programming language syntax}
\label{fig:syntax}
\end{figure}

Thus, a concrete or an abstract (Boolean) program consists of a 
declaration of global variables, followed by a list of procedure definitions; a
procedure definition consists of a declarations of local variables,
followed by a sequence of labeled statements; a statement is a $\mtt{skip}$,
(parallel) assignment, conditional, loop, $\mtt{assume}$,
(call-by-value) procedure $\mtt{call}$, $\mtt{return}$, $\mtt{goto}$
or $\mtt{assert}$ statement.

We make the following assumptions: (a) there is a distinguished initial
procedure $\mathtt{main}$, which is not called from any other procedure, (b)
all variable and formal parameter names are globally unique, (c) the number of
actual parameters in a procedure call matches the number of formal parameters
in the procedure definition, (d) $\mtt{goto}$ statements are not used
arbitrarily; they are used only to simulate the flow of control in structured
programs, (e) the last statement in the loop body of every $\mtt{while}$
statement is a $\mtt{skip}$ statement,  and (f) $\en{type}$ includes integers
and Booleans.  In addition, for Boolean programs, we assume: (a) all variables
and formal parameters are of $\en{type}$ Boolean and (b) all expressions -
$\en{expr}$, $\en{bexpr}$ - are Boolean expressions defined as follows:\\

\begin{tabular}[h]{l l p{9cm}}
$\en{bexpr}$ & $::=$ & $*$ $|$ $\en{detbexpr}$\\
$\en{detbexpr}$ & $::=$ & $\true$ $|$ $\false$ $|$ $\var$
\newline
$|$ $\neg\en{detbexpr}$ $|$ $\en{detbexpr} \Rightarrow \en{detbexpr}$
\newline
$|$ $\en{detbexpr} \vee \en{detbexpr}$ $|$
$\en{detbexpr} \wedge \en{detbexpr}$
\newline
$|$ $\en{detbexpr} = \en{detbexpr}$
$|$ $\en{detbexpr} \neq \en{detbexpr}$,
\end{tabular}

\noindent where $b$ is a Boolean variable.  Thus, a Boolean expression is
either a deterministic Boolean expression or the expression $*$, which
nondeterministically evaluates to $\true$ or $\false$\footnote{In practice, a
nondeterministic Boolean expression is any Boolean expression containing $*$ or
the expression $\mtt{choose}(e_1,e_2)$, with $e_1$, $e_2$ being deterministic
Boolean expressions (if $e_1$ is $\true$, $\mtt{choose}(e_1,e_2)$ evaluates to
$\true$, else if $e_2$ is $\true$, $\mtt{choose}(e_1,e_2)$ evaluates to
$\false$, else $\mtt{choose}(e_1,e_2)$ evaluates to $*$). While we handle
arbitrary nondeterministic Boolean expressions in our prototype tool (see
\secref{expt}), we only consider $*$ expressions in our exposition for
simplicity.}. We assume that $*$ expresses a {\em fair} nondeterministic
choice, \ie, $*$ does not permanently evaluate to the same value. We assume
that Boolean expressions in $\mtt{assume} \, (\en{bexpr})$ and $\mtt{assert} \,
(\en{bexpr})$ statements are always deterministic. Thus, a concrete program
contains no nondeterministic expressions, and a Boolean program contains
nondeterministic expressions only in the RHS of assignment statements.


Note that the above syntax does not permit return values from
procedures. However, return values can be easily modeled using extra
global variables. Hence, this syntax simplification does not affect
the expressivity of the programming language. Indeed, the above syntax
is quite general.\\

\noindent{\em Notation}. Let us fix some notation before we proceed.  For program $\cP$, let
$\{F_0,\ldots, F_\procind\}$ be its set of procedures with $F_0$
being the $\mtt{main}$ procedure, and let $\gVar(\cP)$ denote the set
of global variables.  For procedure $F_i$, let $\Stmt_i$ and $\cL_i$
denote the sets of statements and locations, respectively, and let
$\fVar_{i}$ and $\lVar_{i}$ denote the sets of formal parameters and
local variables, respectively, with $\fVar_{i} \subseteq \lVar_{i}$.
Let $\Var\ofp = \gVar\ofp \, \cup \, \bigcup_{i=1}^{t} \lVar_{i}$
denote the set of variables of $\cP$, and $\cL\ofp = \bigcup_{i=1}^t
\cL_i$  denote the set of locations of $\cP$. For a location $\ell$
within a procedure $F_i$, let $\inscope(\ell) = \gVar\ofp \, \cup
\, \lVar_{i}$ denote the set of all variables in $\cP$ whose scope
includes $\l$.  We denote by $\stmt(\ell)$, $formal(\ell)$ and
$local(\ell)$ the statement at $\ell$ and the sets of formal
parameters and local variables of the procedure containing $\ell$,
respectively. We denote by $\entry_i \in \cL_i$ the location of the
first statement in $F_i$. When the context is clear, we simply use
$\Var$, $\cL$ instead of $\Var\ofp$, $\cL\ofp$ etc.\\

\noindent{\bf Transition Graphs}.  In addition to a {\em textual}
representation, we will often find it convenient to use a {\em transition
graph} representation of programs.  The transition graph representation of
$\cP$, denoted $\cG\ofp$, comprises a set of labeled, rooted, directed graphs
$\cG_0,\ldots ,\cG_{\procind}$, which have exactly one node, $\err$, in common.
Informally, the $i^{th}$ graph $\cG_i$ captures the flow of control in
procedure $F_i$ with its nodes and edges labeled by locations and corresponding
statements of $F_i$, respectively. To be more precise, $\cG_i = (\N_i, Lab_i,
\R_i)$, where the set of nodes $\N_i$, given by $\cL_i \, \cup \,  \exit_i \,
\cup \err$, includes a unique entry node $\entry_i$, a unique exit node
$\exit_i$ and the error node $\err$, the set of labeled edges $\R_i \,
\subseteq \, \N_i \times Lab_i \times \N_i$ is defined as follows: for all
$\ell,\ell' \in \N_i$, $(\ell, \vs, \ell') \in \, \R_i$ iff:

{\renewcommand\labelitemi{-}
\begin{itemize}
\item $\stmt(\ell)$ is an assignment, $\mtt{assume} \,
(g)$ or $\mtt{call} \, \f(e_1,\ldots,e_\formalind)$ statement,
$\ell'$ is the next sequential location\footnote{The next
sequential location of the last statement in the $\mtt{then}$ or
$\mtt{else}$ branch of a conditional statement is the location
following the conditional statement. The next sequential
location of the last statement in the $\mtt{main}$ procedure is
$\exit_0$.} in $F_i$ after $\ell$ and $\vs = \stmt(\ell)$, or,

\item $\stmt(\ell)$ is a $\mtt{skip}$ statement and either (a)
$\stmt(\ell)$ is the last statement in the loop body of
a statement $\ell': \, \mtt{while} \, (g)$ and $\vs$ is the empty
label, or, (b) $\ell'$ is the next sequential location in $F_i$
after $\ell$ and $\vs$ is the empty label, or,

\item $\stmt(\ell)$ is $\mtt{if} \, (g)$, and either (a) $\ell'$,
denoted $Tsucc(\ell)$, is the location of the first statement in the
$\mtt{then}$ branch and $\vs = \mtt{assume} \, (g)$, or, (b) $\ell'$,
denoted $Fsucc(\ell)$, is the location of the first statement in the
$\mtt{else}$ branch and $\vs = \mtt{assume} \, (\neg g)$, or,

\item $\stmt(\ell)$ is $\mtt{while} \, (g)$, and either (a) $\ell'$,
denoted $Tsucc(\ell)$, is the location of the first statement in the
$\mtt{while}$ loop body and $\vs = \mtt{assume} \, (g)$, or, (b)
$\ell'$, denoted $Fsucc(\ell)$, is the next sequential location in
$F_i$ after the end of the $\mtt{while}$ loop body and $\vs =
\mtt{assume} \, (\neg g)$, or,

\item $\stmt(\ell)$ is $\mtt{assert} \, (g)$, and either $\ell'$,
denoted $Tsucc(\ell)$, is the next sequential location in $F_i$ after
$\ell$ and $\vs$ is the empty label, or, (b) $\ell'$, denoted
$Fsucc(\ell)$, is the node $\err$ and $\vs$ is the empty label, or,

\item $\stmt(\ell)$ is a $\mtt{goto}$ statement that includes the label
$\ell'$, and $\vs$ is the empty label, or,

\item $\stmt(\ell)$ is a $\mtt{return}$ statement, $\ell' = exit_i$
and $\vs = \mtt{return}$.
\end{itemize}

Let $succ(\ell)$ denote the set $\{\ell': (\ell, \vs, \ell')
\in \R_i$\} for some $i \in [0,\procind]$. A {\em path} $\pi$ in $\cG_i$ is a sequence
of labeled connected edges; with some overloading of notation, we
denote the sequence of statements labeling the edges in $\pi$ as
$\stmt(\pi)$. Not that every node in $\cG_i$ is on some path between $\entry_i$
and $\exit_i$.

The transition graph of Boolean programs can be defined similarly (see
\figref{runex.BTG}).  The main modification is as follows.  In defining the
set of labeled edges $\R_i$ of graph $\cG_i = (\N_i,Lab_i,\R_i)$ in the
transition graph representation $\cG(\cB)$ of $\cB$, for $\ell \in \N_i$ with
$\stmt(\ell)$ given by $\mtt{if} \, (*)$ or $\mtt{while} \, (*)$,
$Tsucc(\ell)$, $Fsucc(\ell)$ are defined as above, but the
labels $\vs_1$, $\vs_2$ in $(\ell,\vs_1,Tsucc(\ell))$,
$(\ell,\vs_2,Fsucc(\ell))$ are each set to $\mtt{assume} \, (\true)$. \\

\noindent{\bf Program Semantics and Correctness}. 
\noindent Given a set $\Var_s \subseteq \Var$ of variables, a {\em valuation}
$\Omega$ of $\Var_s$ is a function that maps each variable in $\Var_s$
to an appropriate value of its {\em type}.  $\Omega$ can be naturally
extended to map well-typed expressions over variables to values.

An operational semantics can be defined for our programs by formalizing the
effect of each type of program statement on a program {\em configuration}.  A
configuration $\eta$ of a program $\cP$ is a tuple of the form $(\ell, \Omega,
\stack)$, where  where $\ell \in
\bigcup_{i=0}^\procind \N_i$, $\Omega$ is a valuation of the variables in
$\inscope(\ell)$\footnote{For $\ell = \exit_i$,
$inscope(\ell) = \gVar \cup \lVar_i$, and for $\ell = \err$,
$\inscope(\ell)$ is undefined.}, and $\stack$ is a stack of elements. Each element of $\stack$
is of the form $(\widetilde{\ell}, \widetilde{\Omega})$, where
$\widetilde{\ell} \in \cL_i$ for some $i$ and $\widetilde{\Omega}$ is a
valuation of the variables in $local(\widetilde{\ell})$. A program {\em state}
is a pair of the form $(\ell,\Omega)$, where $\ell$ and $\Omega$ are as defined
above; thus a program state excludes the stack contents.  A configuration $(\ell,
\Omega, \stack)$ of $\cP$ is called an initial configuration if $\ell =
\entry_0$ is the entry node of the $\mtt{main}$ procedure and $\stack$ is the
empty stack. We use $\eta \leadsto \eta'$ to denote that $\cP$ can transition
from configuration $\eta = (\ell, \Omega, \stack)$ to configuration $\eta' =
(\ell', \Omega', \stack')$; the transitions rules for each type of program
statement at $\ell$ and for exit nodes of procedures are presented in
\figref{trans}.

\begin{figure}[t]
\footnotesize
\begin{tabular}{|p{1.1cm}|p{2.8cm}||p{8.8cm}|}
\hline
\multicolumn{2}{|c||}{Cases}  & $(\ell, \Omega, \stack) \leadsto (\ell', \Omega', \stack')$ if:\\
\hline
\multirow{7}{*}{$\stmt(\ell)$:} &
 $\mtt{skip}$ & \multirow{2}{*}{$\ell' = succ(\ell)$, $\Omega' = \Omega$ and $\stack' = \stack$}\\
& $\mtt{return}$ & \\
\cline{2-3}
& $\mtt{goto} \, \ell_1 \, \mtt{or} \ldots \mtt{or} \, \ell_n$ & $\ell' \in succ(\ell)$, $\Omega' = \Omega$ and $\stack' = \stack$\\
\cline{2-3}
& $\mtt{assume} \, g$ & $\Omega(g) = \true$, $\ell' = succ(\ell)$,
$\Omega' = \Omega$ and $\stack' = \stack$\\
\cline{2-3}
& $\mtt{if} \, g$ & either $\Omega(g)= \true$, $\ell' = Tsucc(\ell)$,
$\Omega' = \Omega$ and $\stack' = \stack$, or,\\
& $\mtt{while} \, g$ & $\Omega(g)= \false$, $\ell' = Fsucc(\ell)$,
$\Omega' = \Omega$ and $\stack' = \stack$\\
\cline{2-3}
& $\mtt{assert} \, g$ & either $\Omega(g)= \true$, $\ell' = Tsucc(\ell)$, $\Omega' =
\Omega$ and $\stack' = \stack$, or, \\
& &  $\Omega(g) = \false$ and $\ell' = Fsucc(\ell) = \err$\\
\cline{2-3}
& $\vname_1,\ldots,\vname_m \, \assign$
\newline
$e_1,\ldots,e_m$ & 
$\ell' = succ(\ell)$,  
\newline 
$\forall i \in[1,m]: \Omega'(\varname_i) = \Omega(e_i)$, 
\newline
$\forall \varname \not\in \{\varname_1,\ldots, \varname_m\}:
\Omega'(\varname) = \Omega(\varname)$ and
\newline
$\stack' = \stack$\\
\cline{2-3}
& $\mtt{call} \, F_j(e_1,\ldots, e_k)$ & $\ell' = \entry_j$, 
\newline 
$\forall \varname_i \in formal(\ell'): \Omega'(\varname_i) = \Omega(e_i)$, 
\newline
$\forall \varname \in \gVar(\cP): \Omega'(\varname) =
\Omega(\varname)$ and
\newline 
$\stack' = (succ(\ell), \Delta) .\stack$, where 
$\forall \varname \in local(\ell): \Delta(\varname) = \Omega(\varname)$\\
\hline
$\ell$: & $\exit_j$ & $\ell' = \ell_{ret}$, 
\newline 
$\forall \varname \in local(\ell'): \Omega'(\varname) = \Delta(\varname)$, 
\newline
$\forall \varname \in \gVar(\cP): \Omega'(\varname) =
\Omega(\varname)$ and
\newline 
$\stack = (\ell_{ret}, \Delta).\stack'$\\
\hline
\end{tabular}
\caption{Transition rules for $(\ell, \Omega, \stack) \leadsto (\ell', \Omega', \stack')$.}
\label{fig:trans}
\end{figure}

Let us take a closer look at the last two transition rules in \figref{trans} -
the only transition rules that affect the stack contents.  Upon {\em execution}
of the statement $\mtt{call} \; F_j(e_1,\ldots, e_k)$ in program
configuration  $(\ell, \Omega, \stack)$, control moves to the entry node of the
called procedure $F_j$; the new valuation $\Omega'$ of program variables is
constrained to agree with $\Omega$ on the values of all global variables, and
maps the formal parameters of $F_j$ to the values of the actual arguments
according to $\Omega$; finally, the element $(succ(\ell), \Delta)$ is pushed
onto the stack, where $succ(\ell)$ is the location to which control returns
after $F_j$ completes execution and $\Delta$ is a valuation of all local
variables of the calling procedure, as recorded in $\Omega$.  The last
transition rule in \figref{trans} captures the return of control to the calling
procedure, say $F_i$, after completion of execution of a called procedure,
say $F_j$; the top of the stack element $(\ell_{ret},\Delta)$ is removed and
is used to retrieve the location $\ell_{ret}$ of $F_i$ to which control must
return as well the valuation $\Delta$ of the local variables of $F_i$; the
new valuation $\Omega'$ of program variables is constrained to agree with
$\Omega$ on the values of all global variables, and to agree with $\Delta$ on
the values of all local variables of $F_i$.

An {\em execution path} of program $\cP$ is a sequence of configurations, $\eta
\leadsto \eta' \leadsto \eta'' \leadsto \ldots$, obtained by repeated
application of the transition rules from \figref{trans}, starting from an
initial configuration $\eta$. Note that an execution path may be finite or
infinite. The last configuration $(\ell, \Omega,\stack)$ of a finite execution
path may either be a {\em terminating configuration} with $\ell = exit_0$, or
an {\em error configuration} with $\ell = \err$, or a {\em stuck configuration}
with $\ell \neq exit_0$. An execution path ends in a stuck configuration $\eta$
if none of the transition rules from \figref{trans} are applicable to $\eta$.
In particular, notice that notice that a transition from configuration
$(\ell, \Omega, \stack)$ with $\stmt(\ell)$ being $\mtt{assume} \,
(g)$ is defined only when $\Omega(g) = \true$.

The operational semantics of Boolean programs can be defined similarly.  The
main modifications are as follows.  For $\stmt(\ell)$ given by $\mtt{if} \,
(*)$ or $\mtt{while} \, (*)$, we say $(\ell, \Omega, \stack) \leadsto (\ell',
\Omega', \stack')$ if $\ell' \in succ(\ell)$, $\Omega' = \Omega$ and $\stack' =
\stack$. For  $\stmt(\ell)$ given by the assignment statement
$b_1,\ldots,b_j,\ldots,b_m \, \assign$ $e_1,\ldots,*,\ldots,e_m$, we say
$(\ell, \Omega, \stack) \leadsto (\ell', \Omega', \stack')$ if $\ell' =
succ(\ell)$, $\stack' = \stack$, $\forall i \in \{1,\ldots,j-1,j+1,\ldots,m]:
\Omega'(b_i) = \Omega(e_i)$, $\forall \varname \not\in \{b_1,\ldots, b_m\}:
\Omega'(\varname) = \Omega(\varname)$, and either $\Omega'(b_j) = \true$ or
$\Omega'(b_j) = \false$. This transition rule can be extended to handle other
scenarios such as assignment statements with multiple $*$ expressions in the
RHS, and $\mtt{call}$ statements with $*$ expressions in the actual arguments.

An assertion in program $\cP$, is a statement of the form $\ell: \mtt{assert}
\, (g)$, with $g$ being a  quantifier-free, first order expression representing
the expected values of the program variables in $inscope(\ell)$ at $\ell$.  We
will use the term assertion to denote both the statement $\ell: \mtt{assert} \,
(g)$ as well as the expression $g$.  We say a program configuration $(\ell,
\Omega,\stack)$ satisfies an assertion, if the embedded variable valuation
$\Omega$ satisfies the same.

Given a program $\cP$ annotated with a set of assertions, $\cP$ is {\em
partially correct} iff every finite execution path of $\cP$ ends in a
terminating configuration. We say $\cP$ is {\em totally correct} iff every
execution path is finite and ends in a terminating configuration. In what
follows, we assume that all programs are annotated with a set of assertions.

In specifying correctness for Boolean programs, we interpret the nondeterminism
in them as Dijkstra's {\em demonic} nondeterminism \cite{D76}.  Given a program
$\cB$ annotated with a set of assertions, $\cB$ is partially correct iff every
{\em finite} execution path of $\cB$ ends in a terminating configuration for
all nondeterministic choices that $\cB$ might make. $\cB$ is totally correct
iff every execution path is finite and ends in a terminating configuration, for
all nondeterministic choices that $\cB$ might make.

Unless otherwise specified, an {\em incorrect} program is one that is
not partially correct.

\noindent{\em Remark}: While we found it convenient to define Boolean programs
as we did above, it is worth noting that formalisms such as pushdown systems
\cite{BEM97} and recursive state machines \cite{ABEGRY05} are equivalent to
Boolean programs.


\section{Cost-aware Program Repair}\label{sec:prob}

\subsection{The Problem}

Let $\Stmttype$ denote the set of {\em statement types} in program $\cP$. 
As can be seen from \figref{runex.BTG}, it suffices to consider the set of 
statement types given by $\Stmttype = \{\mtt{skip},  \mtt{assign}, \mtt{assume},
\mtt{assert}, \mtt{call}, \mtt{return}, \mtt{goto}\}$. 
Given a statement $\st$,
let $\type(\st)$ be an element of $\Stmttype$ denoting the statement
type of $\st$.  Let $\U = \{u_0,u_1,\ldots,u_\upind\}$ be a set of
permissible, statement-level {\em update schemas}: $u_0 = id$ is the
{\em identity} update schema that maps every statement to itself, and
$u_i$, $i \in [1,m]$, is a function $\stmttype \mapsto \wstmttype$,
$\stmttype, \wstmttype \in \Stmttype \setminus \{\mtt{assert}\}$, that
maps a statement type to a statement type. For each update schema $u$,
given by $\stmttype \mapsto \wstmttype$, we say $u$ can be {\em
applied} to statement $\st$ to get statement ${\widehat \st}$ if
$\type(\st) = \stmttype$; $\type({\widehat \st})$ is then given by $\wstmttype$. 
For example, $u$, given by $\mtt{assign}  \mapsto
\mtt{assign}$, can be applied to the assignment statement $\ell: x \,
\assign \, y$ to get other assignment statements such $\ell: x \,
\assign \, x + y$, $\ell: y \, \assign \, x + 1$ \etc Notice that
update schemas in $\U$ do not affect the label of a statement, and
that we do not permit any modification of an $\mtt{assert}$ statement.
In this paper, we fix the following set of permissible update
schemas for
programs:
\begin{align}\label{upschema}
\U = \{\identity, \, &\mtt{assign} \mapsto \mtt{assign}, \,
\mtt{assign} \mapsto \mtt{skip}, \, \mtt{assume} \mapsto
\mtt{assume},\\
&\mtt{call} \mapsto \mtt{call}, \, \mtt{call} \mapsto \mtt{skip}\}. \notag
\end{align}

We extend the notion of a statement-level update to a program-level
update as follows.  For programs $\cP$, $\wcP$, let the respective
sets  of locations be $\cL$, $\wcL$ and let $\stmt(\ell)$,
$\wstmt(\ell)$ denote the respective statements at location $\ell$.
Let $\update: \cL \mapsto \U$ be a function that maps each location of
$\cP$ to an update schema in $\U$. We say $\wcP$ is a $\update$-update
of $\cP$ iff $\cL = \wcL$ and for each $\ell \in \cL$, $\wstmt(\ell)$
is obtained by applying $\update(\ell)$ on $\stmt(\ell)$.

Let $\cost: \U \times \cL \to \Nat$ be a cost function that maps a tuple,
consisting of a statement-level update schema $u$ and a location $\ell$ of
$\cP$, to a certain cost. Thus, $\cost(u,\ell)$ is the cost of applying update
schema $u$ to the $\stmt(\ell)$.  We impose an obvious restriction on $\cost$:
$\forall \ell \in \cL: \cost(\id,\ell) = 0$.  Since we have already fixed the
set $\U$ and the set $\cL$ of locations of program $\cP$ (or equivalently, of
Boolean program $\cB$), we henceforth use $c$, $\upd$ instead of $\cost$,
$\update$, respectively, The total cost, $\Cost_{c}(\upd)$, of performing a
$\upd$-update of $\cP$ is given by $\sum_{\ell \in \cL} \, c(\upd(\ell),\ell)$.

Given an incorrect
concrete program $\cP$ annotated with assertions,
a cost function $c$ and a repair budget
$\target$, the goal of cost-aware program repair is to compute $\wcP$
such that:

\begin{enumerate}
\item $\wcP$ is totally correct, and,
\item there exists $\upd$:
\begin{enumerate}
\item $\wcP$ is some $\upd$-update of $\cP$, and
\item $\Cost_{c}(\upd) \leq \target$.
\end{enumerate}
\end{enumerate}

\noindent If there exists such a $\wcP$, we say $\wcP$ is a {\em
$(\U,c,\target)$-repair of $\cP$}.

In addition to the above problem, we propose another problem as
follows. Let $\cT = \{\cT_1,\ldots,\cT_h\}$ be a set of {\em
templates} or {\em grammars}, each representing a syntactical restriction for the
modified expressions in $\wcP$.  The syntax of an example template,
say $\cT_1$, defining Boolean-valued linear arithmetic expressions
over the program variables, denoted $\en{blaexpr}$, is shown below:
\begin{center}
\begin{tabular}[h]{l l l}
$\en{blaexpr}$ & $::=$ & $atom$ $|$ $(\,\en{blaexpr}\,)$ $|$ $\neg \en{blaexpr}$
$|$ $\en{blaexpr} \wedge \en{blaexpr}$\\
$\en{atom}$ & $::=$ & $\en{laterm} \, \en{cmp} \, \en{laterm}$\\
$\en{laterm}$ & $::=$ & $const$ $|$ $var$ $|$ $const \times
var$
$|$ $\en{laterm} + \en{laterm}$\\
$\en{cmp}$ & $::=$ & $=$ $|$ $<$ $|$ $\leq$.\\
\end{tabular}
\end{center}
\noindent In the above, $const$ and $var$ denote integer-valued or real-valued
constants and program variables, respectively. Expressions such as
$\varname_1 +
2 \times \varname_2 \leq \varname_3$, $(\varname_1 < \varname_2) \, \wedge \,
(\varname_3 = 3)$ \etc,
that satisfy the syntactical requirements of the template $\cT_1$,
are said to belong to the {\em language} of the template, denoted
$L(\cT_1)$.

Let $\temp: \cL \to \cT$, be a function that maps each location of
$\cP$ to a template in $\cT$.  Let $\cE(\stmt(\ell))$ denote a set
that includes all expressions in certain statement types
and be defined as follows: if $\stmt(\ell)$ is
$\varname_1,\ldots,\varname_\assind  \; \bm{\assign} \; e_1,\ldots,
e_\assind$, $\cE(\stmt(\ell)) = \{e_1,\ldots,e_\assind\}$, else if
$\stmt(\ell)$ is $\mtt{call} \, F_j(e_1,\ldots,e_\formalind)$,
$\cE(\stmt(\ell)) = \{e_1,\ldots,e_\formalind\}$, else if
$\stmt(\ell)$ is $\mtt{assume} \, (g)$, $\cE(\stmt(\ell)) = \{g\}$
else, $\cE(\stmt(\ell))$ is the empty set.

Given $\temp$, along with (incorrect) $\cP$, 
$c$ and $\target$, the goal of {\em template-based}, cost-aware
program repair is to compute $\wcP$ such that:

\begin{enumerate}
\item $\wcP$ is correct, and,
\item there exists $\upd$:
\begin{enumerate}
\item $\wcP$ is some $\upd$-update of $\cP$,
\item $\Cost_c(\upd) \leq \target$, and
\item for each location $\ell$:\\
$\upd(\ell) \neq \id  \; \Rightarrow \forall e \in
\cE(\wstmt(\ell)): \, e \in L(\temp(\ell))$.
\end{enumerate}
\end{enumerate}

We conjecture that an insightful choice for the cost function $c$ can
help prune the search space for repaired programs and help incorporate
expert user intuition and intent in automatic program repair.
Exploration of suitable cost-functions is beyond the scope of this
dissertation. For now, we would only like to emphasize that our
cost-function is quite flexible, and can be used to constrain the
computation of $\wcP$ in diverse ways. For example, the user can
choose to search for $\wcP$ that differs from $\cP$ in at most
$\target$ statements by defining $c$ as:
\begin{align*}
\forall \ell \in \cL, u
\in \U: u \neq \id \; \Rightarrow \; \cost((\id,\ell)) = 1.
\end{align*}
Or, the user can choose to search for $\wcP$ that does not modify any
statement within a {\em trusted} procedure $\cP_i$ by defining $c$
as:
\begin{align*}
\forall \ell \in \cL, u \in \U: \, &u\neq id \, \wedge \, \ell \in
\cL_i \;
\Rightarrow \; \cost((u,\ell)) = N \text{  and } \\
&u \neq id \, \wedge \, \ell
\not\in \cL_i  \; \Rightarrow \; \cost((u,\ell)) = 1,
\end{align*}
where $N$ is some prohibitively large number. Or, the user can choose
to favor the application of a particular update schema, say $u_1$,
over
others by defining $c$ as:
\begin{align*}
\forall \ell \in \cL, u
\in \U:\, &u \neq id \text{ and } u \neq u_1 \; \Rightarrow \;
\cost((u,\ell))
= N \text{  and}\\
&u = u_1  \; \Rightarrow \; \cost((u,\ell)) = 1,
\end{align*}
where $N$ is some prohibitively large number, and so on.
Similarly, insightful templates choices can help guide the search for
repairs based on user input.

\subsection{Solution Overview}

We present a predicate abstraction-based framework for cost-aware
program repair.  Recall that we had fixed a Boolean program $\cB$ in
\secref{prelims} such that $\cB$ is obtained from $\cP$ via iterative
predicate abstraction-refinement and $\cB$ exhibits a non-spurious
counterexample path. In addition to $\cP$, $\cost$ and
$\target$, our framework requires: the Boolean program $\cB$ and the
corresponding function $\conc$ that maps Boolean variables to their
respective predicates.  The computation of a suitable repaired program
$\wcP$ involves two main steps:

\begin{enumerate}
\item Cost-aware repair of $\cB$ to obtain $\wcB$, and 
\item Concretization of $\wcB$ to obtain $\wcP$.
\end{enumerate}

The problem of cost-aware repair of a Boolean program $\cB$ can be
defined in a manner identical to cost-aware repair of concrete program
$\cP$. Concretization of $\wcB$ involves mapping each statement of
$\wcB$ that has been modified by $\update$ into a corresponding
statement of $\cP$, using the function $\conc$. For template-based
repair of $\cP$, the concretization needs to ensure that the modified
expressions of $\cP$ meet the syntactic requirements of the
corresponding templates. In the following sections, we describe these
two steps in detail.

\section{Cost-aware Repair of Boolean Programs}\label{sec:algo1}

\noindent Our solution to cost-aware repair of a Boolean program $\cB$ relies on
automatically computing {\em inductive assertions}, along with a
suitable $\wcB$, that together certify the partial correctness of
$\wcB$. In what follows, we explain our adaptation of the method
of inductive assertions \cite{Floyd67,Manna74}
for cost-aware program repair. \\

\noindent{\bf Cut-set}. Let $\N = \N_0 \cup \ldots \cup \N_\procind$
be the set of nodes in $\cG(\cB)$, the transition graph representation
of $\cB$. We define a cut-set $\Lambda \subseteq \N$ as a set of
nodes, called {\em cut-points}, such that for every $i \in [0,\procind]$:
(a) $\entry_i, \exit_i \in \Lambda$, (b) for every edge $(\ell,
\st,\ell') \in \R_i$ where $\st$ is a procedure $\mtt{call}$, $\ell,
\ell' \in \Lambda$, (c) for every edge $(\ell, \st,\ell') \in \R_i$
where $\st$ is an $\mtt{assert}$ statement, $\ell,\ell' \in \Lambda$,
and (d) every cycle in $\cG_i$ contains at least one node in
$\Lambda$. A pair of cut-points $\cs$, $\ce$ in some $\cG_i$ is
said to be {\em adjacent} if every path from $\cs$ to $\ce$ in
$\cG_i$ contains no other cut-point. A {\em verification path} is any
path from a cut-point to an adjacent cut-point; note that there can be
more than one verification path between two adjacent cut-points.  \\

\noindent {\em Example}: The set $\{\ell_1, \ell_2, \ell_7, exit\}$ is a valid
cut-set for Boolean program $\cB$ in \figref{runexPBTG}.
The verification paths in $\cG(\cB)$ corresponding to this cut-set
are as follows:
\begin{enumerate}
\item $\ell_1 \tran{\mtt{assume} \, (b_2)} \ell_2$
\item $\ell_2 \tran{\mtt{assume} \, (\mtt{T})} \ell_3 \tran{b_0, b_1,
b_2  \, \assign \, *, *, *} \ell_4 \tran{}  \ell_2$
\item $\ell_2 \tran{\mtt{assume} \, (\mtt{T})} \ell_0 \tran{} \ell_7$
\item $\ell_1 \tran{\mtt{assume} \, (\neg b_2)} \ell_5 \tran{\mtt{assume}
\, (\neg b_1)} \ell_7$
\item $\ell_1 \tran{\mtt{assume} \, (\neg b_2)} \ell_5 \tran{\mtt{assume}
\, (b_1)} \ell_6 \tran{b_1, b_1, b_2 \, \assign \, *, *, *} \ell_7$
\item $\ell_7 \tran{\mtt{assert} \, (\neg b_0)} exit$\footnote{Labeling
this edge with $\mtt{assert} \, (\neg b_0)$ is a slight abuse of the
semantics of an assert statement. Our justification
is that the constraints formulated later in this section require that
the assertion is $\true$ whenever control reaches location $\ell_7$ in
an execution path.}
\end{enumerate}

\smallskip


\noindent{\bf Inductive assertions}. We denote an inductive assertion
associated with cut-point $\cut$ in $\Lambda$ by $\I_\cut$.
Informally, an inductive assertion $\I_\cut$ has the property that
whenever control reaches $\cut$ in any program execution, $\I_\cut$
must be $\true$ for the current values of the variables in scope.
Thus, for a Boolean program, an inductive assertion $\I_\cut$ is in
general a Boolean formula over the variables whose scope includes
$\cut$. To be precise, $\I_\cut$ is a Boolean formula over
$\Var_s[\cut]$, where $\Var_s[\cut]$ denotes an $\cut^{th}$ copy of
the subset $\Var_s$ of the program variables, with $\Var_s = GV \,
\cup \, formal(\cut)$ if $\cut \in \{\exit_1,\ldots,\exit_\procind\}$,
and $\Var_s = \inscope(\cut)$ otherwise.  Thus, except for the
$\mtt{main}$ procedure, the inductive assertions at the exit nodes of
all procedures exclude the local variables declared in the procedure.
Let $\I_\Lambda$ denote the set of inductive assertions associated
with all the cut-points in $\Lambda$. \\

\noindent{\bf Verification conditions}. A popular approach to
verification of sequential, imperative programs is to compute
$\I_\Lambda$ such that $\I_\Lambda$ satisfies a set of constraints
called {\em verification conditions}. Let $\pi$ be a verification path
in $\cG_i$, from cut-point $\cs$ to adjacent cut-point $\ce$.  The
verification condition corresponding to $\pi$, denoted $\VC(\pi)$, is
essentially the Hoare triple $\en{\I_\cs} \, \stmt(\pi) \,
\en{\I_\ce}$, where $\stmt(\pi)$ is the sequence of statements
labeling $\pi$. When $\I_\cs$, $\I_\ce$ are {\em unknown}, 
$\VC(\pi)$ can be seen as a constraint encoding all possible solutions for 
$\I_\cs$, $\I_\ce$ such that: every program execution
along path $\pi$, starting from a set of variable valuations
satisfying $\I_\cs$, terminates in a set of variable valuations
satisfying $\I_\ce$. Note that the definitions of cut-sets and
adjacent cut-points ensure that we do not have to worry about
non-termination along verification paths. 

The Hoare triple $\en{\I_\cs} \, \stmt(\pi) \, \en{\I_\ce}$ can be
defined using weakest preconditions or strongest postconditions. In
this paper, as we will see shortly, we find it convenient to use
strongest postconditions.\\

\noindent{\bf Program verification using the inductive assertions
method}. Given a program $\cB$ annotated with assertions, and a set
$\Lambda$ of cut-points, $\cB$ is partially correct if one can compute
a set $\I_\Lambda$ of inductive assertions such that: for every
verification path $\pi$ between every pair $\cs,\ce$ of adjacent
cut-points in $\cG$, $\VC(\pi)$ is valid.\\

\noindent{\bf Cost-aware repairability conditions for partial correctness}.
Let $\C: \bigcup_{i=0}^t \N_i \to \Nat$ be a function mapping locations to
costs.  We find it convenient to use $\C_\cs$ to denote the value $\C(\cs)$ at
location $\cs$. We set $\cI_{\entry_0} = \true$ and $\C_\cut = 0$ if $\cut \in
\{\entry_0,\ldots, \entry_\procind\}$.  Informally, $\C_\cs$ with $\cs \in
\N_i$ can be seen as recording the cumulative cost of applying a sequence of
update schemas to the statements in procedure $F_i$ from location $\entry_i$ to
$\cs$.  Thus, for a specific update function $\upd$ with cost function $c$,
$\C_{\exit_0}$ records the total cost $Cost_c(\upd)$ of performing an
$\upd$-update of the program.  Given a verification path $\pi$ in $\cG_i$, from
cut-point $\cs$ to adjacent cut-point $\ce$, we extend the definition of
$\VC(\pi)$ to define the cost-aware repairability condition corresponding to
$\pi$,  denoted $\CRC(\pi)$. $\CRC(\pi)$ can be seen as a constraint encoding
all possible solutions for inductive assertions $\cI_\cs$, $\cI_\ce$ and update
functions $\update$, along with associated functions $\C$, such that: every
program execution that proceeds along path $\pi$ via statements modified by
applying the update schemas in $\update$, starting from a set of variable
valuations satisfying $\cI_\cs$, terminates in a set of variable valuations
satisfying $\cI_\ce$, for all nondeterministic choices that the program might
make along $\pi$.

Before we proceed, recall that $\cI_\cut$ is a Boolean formula over
$\Var_s[\cut]$, with $\Var_s = \gVar \, \cup \, formal(\cut)$ if $\cut
\in \{\exit_1,\ldots,\exit_\procind\}$, and $\Var_s = \inscope(\cut)$
otherwise.  Thus, for all locations $\ci \neq \ce$ in verification
path $\pi$ from $\cs$ to $\ce$, $\Var_s = inscope(\ci)$.  In what
follows, the notation $\funcDefinedBy{u}(\stmt(\ci))$ represents the
class of statements that may be obtained by applying update schema $u$
on $\stmt(\ci)$, and is defined for our permissible update schemas in
\figref{updef}. Here, $f, f_1, f_2$ \etc denote unknown
Boolean expressions\footnote{To keep our exposition simple, we
assume that these unknown Boolean expressions are deterministic.
However, in our prototype tool (see \secref{expt}), we also
have the ability to compute modified statements with
nondeterministic expressions such as $*$ or $\mtt{choose}(f_1,f_2)$.},
over the variables in $inscope(\lambda)$. Note that the update schema
$\mtt{assign} \mapsto \mtt{assign}$, modifies {\em any} assignment
statement, to one that assigns unknown Boolean expressions to {\em
all} variables in $\Var_s$.

\begin{figure}
\begin{center}
\begin{tabular}[h]{|l l|l l|}
\hline
$u$ & & $\funcDefinedBy{u}(\stmt(\ci))$ & \\
\hline
$id$ & & $\stmt(\ci)$ &\\
$\mtt{assign} \mapsto \mtt{skip}$ & & $\mtt{skip}$ & \\
$\mtt{assume} \mapsto \mtt{skip}$ & & $\mtt{skip}$ & \\
$\mtt{call} \mapsto \mtt{skip}$ & & $\mtt{skip}$ & \\
$\mtt{assign} \mapsto \mtt{assign}$ & & $\var_1,\ldots,\var_{|\Var_s|} \;
\bm{\assign} \; f_1,\ldots,f_{|\Var_s|}$ & \\
$\mtt{assume} \mapsto \mtt{assume}$ & &$\mtt{assume} \, f$ & \\
$\mtt{call} \mapsto \mtt{call}$ & &$\mtt{call} \; F_j
(f_1,\ldots,f_\formalind)$, where $\stmt(\ci)$: $\mtt{call} \;
F_j(e_1,\ldots,e_\formalind)$ &\\
\hline
\end{tabular}
\end{center}
\caption{Definition of $\funcDefinedBy{u}(\stmt(\ci))$}
\label{fig:updef}
\end{figure}

\noindent We now define $\CRC(\pi)$. There are three cases to consider.

\begin{enumerate}

\item $\stmt(\pi)$ does not contain a procedure $\mtt{call}$
or $\mtt{assert}$ statement: 

Let $\A_\ci$ denote an assertion associated
with location $\ci$ in $\pi$. $\CRC(\pi)$ is
given by the (conjunction of the) following set of constraints: 
\begin{align}
\label{eqcrc}
\A_\cs \; = \; &\I_\cs \notag\\
\A_\ce \; \Rightarrow \; &\I_\ce\\
\bigwedge_{\cs \preceq \ci \prec \ce} \, 
\bigwedge_{u \in \U_{\stmt(\ci)}} \; 
\upd(\ci) = u  \;\; \Rightarrow \; \; &\C_\cj = \C_\ci +  c(u,\ci)
\; \wedge \notag\\
                               &\A_\cj =
                               sp(\funcDefinedBy{u}(\stmt(\ci)),\A_\ci)\notag.
\end{align}

\noindent In the above, $\prec$ denotes the natural ordering over the
sequence of locations in $\pi$ with $\ci$, $\cj$ being consecutive
locations, \ie, $\cj \in succ(\ci)$.  The notation $\U_{\stmt(\ci)}
\subseteq \U$ denotes the set of all update schemas in $\U$ which may
be applied to $\stmt(\ci)$. The notation
$sp(\funcDefinedBy{u}(\stmt(\ci)),\A_\ci)$ denotes the strongest
postcondition of the assertion $\A_\ci$ over the class of statements
$\funcDefinedBy{u}(\stmt(\ci))$. We define the strongest postcondition
using multiple variable copies - a copy $\Var_s[\ci]$ for each
location $\ci$ in $\pi$.

Let us assume that $\A_\ci$ is a Boolean formula of the
form\footnote{In general, $\A_\ci$ is a
disjunction over Boolean formulas of this form;
$sp(\funcDefinedBy{u}(\stmt(\ci)),\A_\ci)$ can then be obtained by computing a
disjunction over the strongest postconditions obtained by propagating
each such Boolean formula through $\funcDefinedBy{u}(\stmt(\ci)$
using the rules in \figref{spdef}.}:
\begin{align}\label{eqspform} 
\A_\ci = \pc[\cut,\ch]
\, \wedge \bigwedge_{\var \in \Var_s} \var[\ci] = \E[\ch],
\end{align} 

\noindent where $\ch$, $\ci$ are consecutive locations in $\pi$ with $\ci \in
succ(\ch)$, $\pc[\cut,\ch]$ is a Boolean expression over all copies
$\Var_s[\mu]$, $\cut \preceq \mu \preceq \ch$, representing the path condition
imposed by the program control-flow, and $\E[\cut]$ is a Boolean expression over
$\Var_s[\ch]$ representing the $\ci^{th}$ copy of each variable $\var$ in terms
of the $\ch^{th}$ copy of the program variables. Note that $\A_\cut = \I_\cut$
is of the form $\pc[\cut]$. 

\begin{figure}
\centering
\begin{tabular}[h]{|l | p{6.5cm} |}
\hline
$\funcDefinedBy{u}(\stmt(\ci))$ & $sp(\funcDefinedBy{u}(\stmt(\ci)),\A_\ci)$ \\
\hline
$\mtt{skip}$ & \multirow{2}{*}{$\pc[\cut,\ch] \, \wedge \, \bigwedge_{\var
\in \Var_s} \var[\cj] = \var[\ci]$}\\
$\mtt{goto}$ & \\
\hline
\multirow{2}{*}{$\mtt{assume} \, g$} & \multirow{2}{*}{$g[\ci] \, \wedge \,
\pc[\cut,\ch] \, \wedge \, \bigwedge_{\var \in \Var_s} \var[\cj] =
\var[\ci]$}\\
& \\
\hline
\multirow{2}{*}{$\mtt{assume} \, f$} & \multirow{2}{*}{$f[\ci] \, \wedge \,
\pc[\cut,\ch] \, \wedge \, \bigwedge_{\var \in \Var_s} \var[\cj] =
\var[\ci]$}\\
& \\
\hline
\multirow{3}{*}{$\var_1,\ldots,\var_\assind \; \bm{\assign} \;
e_1,\ldots,e_\assind$} & \\
& $\pc[\cut,\ch] \, \wedge \, \bigwedge_{\var_i \in \Var_s, i \in [1,m]}
\var_i[\cj] = e_i[\ci] \;  \wedge $\\
& \hspace{1.3cm}$\bigwedge_{\var_i \in \Var_s, i \not\in [1,m]} \var_i[\cj] =
\var_i[\ci]$\\
& \\
\hline
\multirow{3}{*}{$\var_1,\ldots,\var_{|\Var_s|} \; \bm{\assign} \;
f_1,\ldots,f_{|\Var_s|}$} & \\
& $\pc[\cut,\ch] \, \wedge \, \bigwedge_{\var_i \in \Var_s}
\var_i[\cj] = f_i[\ci]$\\
& \\
\hline
\end{tabular}
\caption{Definition of $sp(\funcDefinedBy{u}(\stmt(\ci)),\A_\ci)$}
\label{fig:spdef}
\end{figure}

Given $\A_\ci$ of the form in (\ref{eqspform}),
$sp(\funcDefinedBy{u}(\stmt(\ci)),\A_\ci)$ is defined in \figref{spdef}.
Observe that $sp(\funcDefinedBy{u}(\stmt(\ci)),\A_\ci)$ is a Boolean formula of
the same form as (\ref{eqspform}), over variable copies from $\Var_s[\cut]$ to
$\Var_s[\cj]$. For the entries $\mtt{assume} \, g$ and
$\var_1,\ldots,\var_\assind \; \bm{\assign} \; e_1,\ldots,e_\assind$, the
expressions $g,e_1,\ldots,e_\assind$ are {\em known} beforehand (these entries
correspond to $u = id$). For the entries $\mtt{assume} \, f$ and
$\var_1,\ldots,\var_{|\Var_s|} \; \bm{\assign} \; f_1,\ldots,f_{|\Var_s|}$, the
expressions $f$, $f_1$, $\ldots$, $f_{|\Var_s|}$ are {\em unknown} (these
entries correspond to $u = \mtt{assume} \mapsto \mtt{assume}$ and $u =
\mtt{assign} \mapsto \mtt{assign}$, respectively). Notation such as $f[\ci]$
denotes that $f$ is an unknown Boolean expression over $\Var_s[\ci]$.  For
nondeterministic expressions in the RHS of an assignment statement
$\var_1,\ldots,\var_\assind \; \bm{\assign} \; e_1,\ldots,e_\assind$, the
strongest postcondition is computed as the disjunction of the strongest
postconditions over all possible assignment statements obtained by substituting
each $*$ expression with either $\false$ or $\true$.

Thus, to summarize, the set of constraints in $(\ref{eqcrc}$) encodes
all $\I_\cs$, $\C_\cs$, $\I_\ce$, $\C_\ce$ and $\update$ such that: if
$\update$ is applied to the sequence of statements $\stmt(\pi)$ to get
some modified sequence of statements, say $\widehat{\stmt}(\pi)$, and
program execution proceeds along $\widehat{\stmt}(\pi)$, then $\I_\ce$
is the strongest postcondition $sp(\widehat{\stmt}(\pi),
\I_\cs)$, and $\C_\ce$ equals the cumulative modification cost,
counting up from $\C_\cs$. 

\item $\stmt(\pi)$ contains a procedure $\mtt{call}$, say
$\mtt{call} \; F_j(e_1,\ldots,e_\formalind)$:

The path $\pi$, given by $(\cs,\mtt{call} \;
F_j(e_1,\ldots,e_\formalind),\ce)$,
is a verification path of length $1$.
Suppose the formal
parameters of $F_j$ are $\var_1,\ldots,\var_\formalind$.\\
$\CRC(\pi)$ is then given by the following set of constraints:
\begin{align}
\label{crcproc}
\upd = id \;\; \Rightarrow \; 
&\C_\ce = \C_\cs +  \C_{\exit_j} \;\, \wedge \notag\\
&\cI_\cs \; \Rightarrow \; \cI_{\entry_j}[\bigwedge_{i\in[1,\formalind]}
\var_i[\entry_j]/e_i[\cs]] \;\, \wedge \notag\\
&\cI_{\exit_j}[\bigwedge_{i\in[1,\formalind]}\var_i[\exit_j]/e_i[\ce]] \; \Rightarrow \; \cI_\ce \notag\\
\upd = \mtt{call} \mapsto \mtt{skip} \;\; \Rightarrow \; 
&\C_\ce = \C_\cs +  c(\mtt{call} \mapsto \mtt{skip},\cs) \;\, \wedge \\
&\cI_\ce = \cI_\cs[\bigwedge_{i\in[1,\formalind]}\var_i[\cs]/\var_i[\ce]] \notag\\
\upd = \mtt{call} \mapsto \mtt{call} \;\; \Rightarrow \; 
&\C_\ce = \C_\cs + \C_{\exit_j} + c(\mtt{call} \mapsto \mtt{call},\cs) \;\, \wedge \notag\\ 
&\cI_\cs \; \Rightarrow \;
\cI_{\entry_j}[\bigwedge_{i\in[1,\formalind]}\var_i[\entry_j]/f_i[\cs]] \; \wedge \notag\\
&\cI_{\exit_j}[\bigwedge_{i\in[1,\formalind]}\var_i[\exit_j]/f_i[\ce]] \; \Rightarrow \; \cI_\ce \notag 
\end{align}

For $\upd = id$, the constraints involve replacing the
$\entry_j^{th}$, $\exit_j^{th}$ copies of the formal parameters in
$\cI_{\entry_j}$, $\cI_{\exit_j}$ with the corresponding actual
parameters $e_1,\ldots,e_\formalind$ expressed over the $\cs^{th}$,
$\ce^{th}$ copies of the program variables, respectively. For $\upd =
\mtt{call} \mapsto \mtt{call}$, a similar substitution is performed,
except the actual parameters are unknown expressions
$f_1,\ldots,f_\formalind$.  Finally, for $\upd = \mtt{call} \mapsto
\mtt{skip}$, the inductive assertion essentially stays the same, with
variable copies appropriately adjusted. $\C_\ce$ is in general the sum
of $\C_\cs$, the cumulative modification cost $\C_{\exit_j}$ of
procedure $F_j$, and the cost of applying the update schema in
question.

\item $\stmt(\pi)$ contains an $\mtt{assert}$ statement, say
$\mtt{assert} \, g$:

Again, $\pi$, given by $(\cs,\mtt{assert} \, g,\ce)$, is
a verification path of length $1$, and $\CRC(\pi)$ is given by the
following set of constraints:
\begin{align*}
\cI_\cs[\bigwedge_{i\in[1,|\Var_s|]} \var_i[\cs]/\var_i[tmp]]  \;
\Rightarrow \; &g[\bigwedge_{i\in[1,|\Var_s|]}\var_i/\var_i[tmp]]\\ 
\cI_\cs[\bigwedge_{i\in[1,|\Var_s|]} \var_i[\cs]/\var_i[tmp]]  \;
\Rightarrow \; 
&\cI_\ce[\bigwedge_{i\in[1,|\Var_s|]}\var_i[\ce]/\var_i[tmp]]\\ 
\C_\ce \; = \; &\C_\cs.
\end{align*}
In the above, we uniformly convert the expressions $\cI_\cs$, $g$ and $\cI_\ce$
into expressions over some temporary copy of the program variables to
enable checking the implications (informally, these implications are
$\cI_\cs \Rightarrow g$ and $\cI_\cs \Rightarrow \cI_\ce$).

\end{enumerate}

\noindent{\bf Cost-aware Boolean program repair}.
Given a cut-set $\Lambda$ of $\cG(\cB)$, let $\Pi_\Lambda$ be the set of all
verification paths between every pair of adjacent cut-points in
$\Lambda$. Given incorrect program $\cB$ annotated with assertions,
the set $\cU$, cost function $c$ and repair budget $\target$,
we say $\cB$ is {\em repairable within
budget $\target$} if given a cut-set $\Lambda$ in $\cG$,
one can compute a set $\cI_\Lambda$ of inductive assertions, an update
function $\upd$, along with models for all unknown
expressions associated with applications of update schemas in $\upd$,
and the valuations of a cumulative-cost-recording function $\C$ such
that: $\C_{exit_0} \leq \target$, for every verification path $\pi
\in \Pi_\Lambda$, $\CRC(\pi)$ is valid and some other constraints are
met. Mathematically, $\cB$ is repairable within
budget $\target$ if the following formula is $\true$:
\begin{align}
\label{PartCorrect}
\exists Unknown \; \forall Var: \;\; \C_{exit_0} \leq \target \;
\wedge \; \bigwedge_{\pi \in \Pi_\Lambda} \CRC(\pi) \; \wedge \;
AssumeConstraints, 
\end{align}
where $Unknown$ is the set of all unknowns and $Var$ is the set of all Boolean
program variables and their copies used in encoding each constraint $\CRC(\pi)$. The set of unknowns includes the inductive assertions in $\cI_\Lambda$,
update function $\upd$, unknown expressions $f, f_1$ \etc associated
with applying the update schemas in $\upd$ and valuations at each
program location of the function $\C$. Finally, $AssumeConstraints$
ensures that any modifications to the guards of
$\mtt{assume}$ statements corresponding to the same
conditional statement are consistent. Thus, for every pair of
{\em updated} $\mtt{assume} \, (f_1)$, $\mtt{assume} \, (f_2)$ statements
labeling edges starting from the same
node in the transition graph, the uninterpreted
functions $f_1$, $f_2$ are constrained to satisfy $f_1 = \neg f_2$.

If the above formula is $\true$, then we can extract models for all the
unknowns from the witness to the satisfiability of the formula: $\forall Var:$
$\C_{exit_0} \leq \target$ $\wedge$ $\bigwedge_{\pi \in \Pi_\Lambda}
\CRC(\pi)$ $\wedge$ $AssumeConstraints$.  In
particular, we can extract an $\upd$ and the corresponding modified statements
to yield a correct Boolean program $\wcB$. The following theorem
states the correctness and completeness of the above algorithm for repairing
Boolean programs for partial correctness.

\begin{thm}
Given the set $\U$ specified in (\ref{upschema}),
and given an incorrect Boolean program $\cB$ annotated with assertions,
cost function $c$ and repair budget $\target$,
\begin{enumerate}
\item if there exists a $(\U,c,\target)$-repair of $\cB$,
the above method finds a $(\U,c,\target)$-repair of $\cB$,
\item if the above method finds a $\wcB$, then $\wcB$ is a $(\U,c,\target)$-repair of $\cB$.
\end{enumerate}
\end{thm}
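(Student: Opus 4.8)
The plan is to prove both parts by reducing the satisfiability of~(\ref{PartCorrect}) to the inductive-assertions characterization of partial correctness recalled above, augmented with the extra unknowns: the update function $\upd$, the modified-expression symbols, and the cumulative-cost function $\C$. (As the section heading and the sentence preceding the theorem indicate, ``$(\U,c,\target)$-repair'' is read here with partial rather than total correctness.) I would prove part~2 (soundness) first, since part~1 appeals to it. Assume~(\ref{PartCorrect}) holds and fix a witness: an update function $\upd$, models for every unknown expression $f,f_1,\ldots$ occurring in the $\funcDefinedBy{u}(\stmt(\ci))$ terms, a family $\cI_\Lambda$ of Boolean formulas, and values $\C_\ell$ at all nodes; let $\wcB$ be the $\upd$-update of $\cB$ obtained by instantiating each modified statement with these models. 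The first lemma is a faithfulness claim for \figref{updef}: for every $u\in\U$ from~(\ref{upschema}) and every statement of the matching type, the statements reachable by applying $u$ are exactly the instances of $\funcDefinedBy{u}(\stmt(\ci))$ over choices of the $f$'s --- in particular the ``assign to all of $\Var_s$'' convention loses nothing, since an assignment touching a proper subset is recovered by taking $f_i=\var_i$ on the untouched variables, and the form with unknown actual parameters covers $\mtt{call}\mapsto\mtt{call}$ --- and none of these updates alters the control-flow graph, so $\Lambda$ remains a cut-set of $\cG(\wcB)$. The second lemma is soundness of \figref{spdef} under the multi-copy encoding: for every verification path $\pi$ from $\cs$ to $\ce$, chaining the $sp$-rules along $\widehat{\stmt}(\pi)$ and existentially projecting onto $\Var_s[\ce]$ yields exactly the strongest postcondition $sp(\widehat{\stmt}(\pi),\cI_\cs)$; this is a routine induction on the length of $\pi$. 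Given these, the conjuncts $\A_\cs=\cI_\cs$ and $\A_\ce\Rightarrow\cI_\ce$ of $\CRC(\pi)$ say precisely that the triple $\en{\cI_\cs}\,\widehat{\stmt}(\pi)\,\en{\cI_\ce}$ is valid --- for $\mtt{call}$- and $\mtt{assert}$-paths this reading is provided directly by the substitution constraints of~(\ref{crcproc}) and of case~3 --- while the $AssumeConstraints$ conjunct makes the updated guards of each conditional complementary, so $\wcB$ is a well-formed program and every verification condition of $\wcB$ relative to $\Lambda$ and $\cI_\Lambda$ is valid; by the inductive-assertions method, $\wcB$ is partially correct.

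It remains, for soundness, to bound the cost. Here I would prove $\C_{\exit_0}=\Cost_c(\upd)$ by induction over the call graph: inside each procedure $F_i$, telescoping the conjuncts $\C_\cj=\C_\ci+c(\upd(\ci),\ci)$ of~(\ref{eqcrc}) along verification paths and feeding in the call-rule equalities $\C_\ce=\C_\cs+\C_{\exit_j}$ of~(\ref{crcproc}) should give $\C_{\exit_i}=\sum_{\ell\in\cL_i}c(\upd(\ell),\ell)$ (the delicate point, discussed below, is that merging paths must charge each statement exactly once); summing over procedures then yields $\C_{\exit_0}=\Cost_c(\upd)$, so $\C_{\exit_0}\leq\target$ forces $\Cost_c(\upd)\leq\target$, and $\wcB$ is a $(\U,c,\target)$-repair. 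For part~1 (completeness), suppose some $\wcB$ is a $(\U,c,\target)$-repair witnessed by $\upd$ with $\Cost_c(\upd)\leq\target$; I would exhibit a model of~(\ref{PartCorrect}). Take the same $\upd$; read the expression models off $\wcB$'s modified statements (putting $f_i=\var_i$ on untouched variables and the complementary guard on paired $\mtt{assume}$ edges so that $AssumeConstraints$ holds); set $\C$ by the recurrence of the cost lemma; and --- the crucial existential --- pick $\cI_\Lambda$. Since $\wcB$ is a finite-state Boolean program, the inductive-assertions method is complete for partial correctness: taking $\cI_\cut$ to be the Boolean formula (over $\Var_s[\cut]$) describing the set of valuations reachable at $\cut$ over executions of $\wcB$ yields inductive assertions under which $\VC(\pi)$, hence --- once the $f$'s are instantiated --- $\CRC(\pi)$, holds for every $\pi\in\Pi_\Lambda$; the $\mtt{assert}$ constraints of case~3 hold because $\wcB$ is correct, and the $\mtt{call}$ constraints of~(\ref{crcproc}) hold because $\cI_{\entry_j},\cI_{\exit_j}$ are the reachable entry/exit valuations of $F_j$. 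With $\C_{\exit_0}=\Cost_c(\upd)\leq\target$, formula~(\ref{PartCorrect}) is satisfied, so the method succeeds, and the $\wcB$ it extracts is a $(\U,c,\target)$-repair by part~2.

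The step I expect to be the main obstacle --- and the point at which the plan above needs the most care --- is the cost-accounting lemma $\C_{\exit_0}=\Cost_c(\upd)$ together with the well-definedness of the $\C$-assignment in the completeness construction. Unlike the assertion symbols, which carry a fresh copy per verification path, the $\C_\ell$ are shared across all paths through $\ell$; so when control-flow merges, several verification paths impose equalities on $\C$ at the common downstream cut-point, and one must show these are mutually consistent and telescope to the single global sum $\sum_\ell c(\upd(\ell),\ell)$ without double-counting, under-counting, or a circular dependence. For (mutually) recursive procedures this asks for organizing the induction by a well-founded measure on the call graph and treating the $\C_{\exit_j}$ as the least solution of the induced system; for branching it asks for pinning down precisely in what sense $\C_{\exit_0}$ and $\Cost_c(\upd)$ coincide, and possibly for a mild hypothesis on the cut-set or the cost function. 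By contrast, the faithfulness of \figref{updef}, the soundness of the $sp$-tables in \figref{spdef}, and the relative completeness of the inductive-assertions method for finite-state programs are standard, and I would dispatch them quickly.
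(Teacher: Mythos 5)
Your proposal is correct in approach and is essentially a fully worked-out version of the paper's proof, which is only three sentences long: the paper observes that (\ref{PartCorrect}) is an $\exists\forall$ formula over a finite search space (finitely many Boolean expressions over finitely many variables, finitely many update functions and $\C$-valuations), hence decidable, and then asserts that parts~1 and~2 follow from the completeness and soundness, respectively, of Floyd's inductive assertions method. Everything you make explicit --- the faithfulness of $\funcDefinedBy{u}(\stmt(\ci))$ for the schemas in (\ref{upschema}), the soundness of the $sp$ tables under the multi-copy encoding, and choosing $\cI_\cut$ to be the exact reachable-state predicate of $\wcB$ at $\cut$ in the completeness direction --- is left implicit there.

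The one substantive point: the cost-accounting difficulty you isolate is real and is \emph{not} addressed by the paper. Because the $\C_\ell$ are shared across verification paths, the conjuncts $\C_\cj = \C_\ci + c(u,\ci)$ in (\ref{eqcrc}) force every verification path between the same pair of adjacent cut-points to accumulate the same cost, and $\C_{\exit_0}$ only telescopes costs along locations actually traversed; in the presence of branching it therefore need not equal $\Cost_{c}(\upd) = \sum_{\ell\in\cL} c(\upd(\ell),\ell)$, and distinct per-branch modification costs can even render the constraints unsatisfiable for an otherwise valid repair. The paper's claim that ``$\C_{\exit_0}$ records the total cost $Cost_c(\upd)$'' is simply asserted in the text preceding the constraints and is taken for granted in the proof. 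So your plan reproduces the paper's argument faithfully, and the obstacle you single out as needing the most care is precisely the step the paper's proof omits; closing it requires either the auxiliary hypothesis on the cut-set/cost function you mention or a modified definition of the $\C$-constraints (e.g., charging each location once globally rather than per path).
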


\begin{proof}
Note that the formula (\ref{PartCorrect}) is a $\exists\forall$ formula over
Boolean variables (Boolean program variables and their copies), unknown Boolean
expressions over these Boolean variables (inductive assertions and expressions
in modified program statements), sequences of update schemas (update
functions) and corresponding sequences of integer costs (valuations of
$\C$). The
number of Boolean variables is finite and hence, the number of unknown Boolean
expressions over them is finite. There are a finite number of update functions
drawn from finite sequences of update schemas in the finite set $\cU$, and a
corresponding finite number of $\C$ functions, with $\C_{\entry_0}$ set to $0$.
Besides these (\ref{PartCorrect}) includes Boolean operators, the $+$ operator and
a finite number of integer constants (corresponding to the cost function $c$).
Clearly, the truth of the formula in (\ref{PartCorrect}) is decidable. In particular,
the formula has a finite number of models.

Given the set $\U$ specified in (\ref{upschema}), the completeness of our method
follows from the completeness of Floyd's inductive assertions method
and the decidability of the formula in (\ref{PartCorrect}).

The soundness of our method follows from the soundness of Floyd's inductive
assertions method.
\end{proof}

\noindent {\em Example:} For the Boolean program in \figref{runex.B}, our tool
modifies two statements: ($1$) the guard for $stmt(\ell_1)$ is changed from
$b2$ to $b0 \vee b1 \vee \neg b2$ and ($2$) the guard for $stmt(\ell_2)$ is
changed from $*$ to $b0 \vee b1 \vee b2$.

\section{Concretization}\label{sec:algo2}

We now present the second step in our framework for computing a concrete
repaired program $\wcP$. In what follows, we assume that we have already
extracted models for $\wcB$ and  $\I_\Lambda$.
Recall that $\conc$ denotes the mapping of Boolean variables to their
respective predicates: for each $i \in [1,|\Var(\cB)|]$, $\conc(\var_i) =
\pred_i$.  The mapping $\conc$ can be extended in a standard way to map
expressions over the Boolean variables in $\Var(\cB)$ to expressions over the
concrete program variables in $\Var(\cP)$. \\ 

\noindent{\bf Concretization of $\wcB$}. The goal of concretization of a
repaired Boolean program $\wcB$ is to compute a corresponding repaired concrete
program $\wcP$. This involves computing a mapping, denoted $\Conc$, from each
modified statement of $\wcB$ into a corresponding modified statement in the
concrete program.  In what follows, we define $\Conc$ for each type of modified
statement in $\wcB$.  Let us fix our attention on a statement at location
$\ell$, with $\Var_s(\cB)$, $\Var_s(\cP)$ denoting the set of concrete,
abstract program variables, respectively, whose scope includes $\ell$.  Let
$\predind = |\Var_s(\cB)|$ and $q = |\Var_s(\cP)|$.

\begin{enumerate} 
\item $\Conc(\mtt{skip}) = \mtt{skip}$
\item $\Conc(\mtt{assume}\, (g)) = \mtt{assume} \, (\conc(g))$
\item $\Conc(\mtt{call} \; F_j(e_1,\ldots,e_\formalind) = \mtt{call} \; F_j(\conc(e)_1,\ldots,\conc(e)_\formalind)$
\item The definition of $\Conc$ for an assignment statement is non-trivial. In fact, in this case, $\Conc$ may be the empty set, 
or may contain multiple concrete assignment statements. 

We say that an assignment statement $\var_1,\ldots,\var_{\predind} \;
\bm{\assign} \; e_1,\ldots,e_\predind$ in $\cB$ is {\em concretizable} if one
can compute expressions $f_1,\ldots,f_q$ over $\Var_s(\cP)$, 
of the same type as the concrete program variables 
$\varname_1, \ldots, \varname_q$ in $\Var_s(\cP)$, respectively, such that a certain set of constraints
is valid. To be precise, $\var_1,\ldots,\var_{\predind} \;
\bm{\assign} \; e_1,\ldots,e_\predind$ in $\cB$ is concretizable if the following formula 
is $\true$:
\begin{align}
\label{conc}
\exists f_1,\ldots, f_q \; \forall \varname_1,\ldots, \varname_q: \;\; 
\bigwedge_{i=1}^\predind \; \conc(\var_i)[\varname_1/f_1,\ldots,\varname_{q}/f_{q}] \; = \; \conc(e_i)
\end{align}
Each quantifier-free constraint $\conc(\var_i)[\varname_1/f_1,\ldots,\varname_{q}/f_{q}] \; = \; \conc(e_i)$ above
essentially expresses the concretization of the abstract assignment $\var_i = e_i$. The substitutions 
$\varname_1/f_1,\ldots,\varname_{q}/f_{q}$ reflect the {\em new} values of the concrete program variables 
after the concrete assignment $\varname_1, \ldots, \varname_{q}$ $\bm{\assign}$ $f_1,\ldots,f_q$. 

If the above formula is $\true$, we can extract models $expr_1,\ldots,expr_q$ for $f_1,\ldots, f_q$, respectively, 
from the witness to the satisfiability of the inner $\forall$-formula. We then say:
\begin{align*}
\varname_1, \ldots, \varname_{q} \; \bm{\assign} \; expr_1,\ldots,expr_q \;
\in \;  \Conc(\var_1,\ldots,\var_{\predind} \; \bm{\assign} \;
e_1,\ldots,e_\predind).
\end{align*}

Note that, in practice, for some $i \in [1,q]$, $expr_i$ may be
equivalent to $\varname_i$, thereby generating a redundant assignment
$\varname_i \; \assign \; \varname_i$. The parallel
assignment can then be compressed by eliminating each redundant
assignment. In fact, it may be possible to infer some such $\varname_i$ without using
(\ref{conc}) by analyzing the dependencies of concrete program variables on the
predicates in $\{\phi_1,\ldots, \phi_\predind\}$ that are actually affected by the
Boolean assignment in question; this exercise is beyond the
current scope of this work.
\end{enumerate}

\noindent{\bf Template-based concretization of $\wcB$}.  Recall that
$\temp(\ell)$, associated with location $\ell$, denotes a user-supplied
template from $\cT$, specifying the desired syntax of the expressions in any
concrete modified statement at $\ell$. Henceforth, we use the shorthand $\tel$ for
$\temp(\ell)$. We find it helpful to illustrate template-based concretization
using an example template. Let us assume that for each concrete program variable
$\varname \in \Var(\cP)$, $\varname \in \Nat \cup \Reals$.  We fix $\tel$ to 
(Boolean-valued) linear arithmetic expressions over the program variables, of
the form $c_0 + \Sigma_{p=1}^q c_p*\varname_p \leq 0$, for $\mtt{assume}$ and
$\mtt{call}$ statements, and (integer or real-valued) linear arithmetic {\em
terms} over the program variables, of the form $c_0 + \Sigma_{p=1}^q c_p*\varname_p$,
for assignment statements. Let us assume that the parameters $c_0, c_1,\ldots,c_q
\in \Reals$. Given $\tel$, let $\Conc_{\tel}$ denote the mapping of abstract statements into 
concrete statements compatible with $\tel$. We can define $\Conc_{\tel}$ for each type of modified
statement in $\wcB$ as shown below. The basic idea is to compute suitable values
for the template parameters $c_0,\ldots,c_q$ that satisfy
certain constraints.  Note that, in general, $\Conc_{\tel}$ may be the empty set,
or may contain multiple concrete statements.

\begin{enumerate}
\item $\Conc_{\tel}(\mtt{skip}) = \mtt{skip}$
\item The statement $\mtt{assume}\, (g)$ is concretizable if 
the following  formula is $\true$:
\begin{align}
\label{tempconcassume}
\exists c,\ldots,c_q \; \forall \varname_1,\ldots,\varname_q: \;\; 
(c_0 + \Sigma_{p=1}^q c_p*\varname_p \leq 0) \,  =  \, \conc(g).
\end{align}
If the above formula is $\true$, we extract values from the witness to the satisfiability 
of the inner $\forall$-formula, and say, 
\begin{align*}
c_0 + \Sigma_{p=1}^q c_p*\varname_p \leq 0 \; \in \; \Conc_{\tel}(\mtt{assume}\, (g)). 
\end{align*}
\item Similarly, the statement $\mtt{call} \; F_j(e_1,\ldots,e_\formalind)$ is 
concretizable if the following formula is $\true$: 
\begin{align*}
\exists c_{1,0},\ldots,c_{k,q} \; \forall \varname_1,\ldots,\varname_q: \;\;
\bigwedge_{i=1}^k \; ((c_{i,0} + \Sigma_{p=1}^q c_{i,p}*\varname_p \leq 0) \; = \; \conc(e_i)).
\end{align*}
If the above formula is $\true$, we can extract values from the witness to the satisfiability
of the inner $\forall$-formula to generate a concrete $\mtt{call}$ statement in 
$\Conc_{\tel}(\mtt{call} \; \cP_j(e_1,\ldots,e_\formalind))$.
\item The statement $\var_1,\ldots,\var_{\predind} \;
\bm{\assign} \; e_1,\ldots,e_\predind$  is concretizable if the formula in (\ref{tempconc}) is $\true$.
For convenience, let $h_j = c_{j,0} + \Sigma_{p=1}^q c_{j,p}*\varname_p$, for $j \in [1,q]$.
\begin{align}
\label{tempconc}
\exists c_{1,0},\ldots,c_{\predind,q} \; \forall \varname_1,\ldots, \varname_q: \;\;
\bigwedge_{i=1}^\predind \; \conc(\var_i)[\varname_1/h_1,\ldots,\varname_{q}/h_{q}] \; = \; \conc(e_i).
\end{align}
If the above formula is $\true$, we can extract values from the witness to the satisfiability
of the inner $\forall$-formula to generate a concrete assignment statement in $\Conc_{\tel}(\var_1,\ldots,\var_{\predind} \;
\bm{\assign} \; e_1,\ldots,e_\predind)$.
\end{enumerate}

\noindent{\em Example}: For our example in \figref{runexPBTG}, the modified
guards, $b0 \vee b1 \vee \neg b2$ and $b0 \vee b1 \vee b2$, in $stmt(\ell_1)$ and $stmt(\ell_2)$ of $\wcB$, respectively are concretized into
$\true$ and $x \leq 1$, respectively using $\conc$.\\

\noindent{\bf Concretization of inductive assertions}.
The concretization of each inductive assertion $\I_\ell \in \I_\Lambda$ is
simply $\conc(\I_\ell)$.

\section{Experiments with a Prototype Tool}\label{sec:expt}
\noindent We have built a prototype tool for repairing Boolean
programs.  The tool accepts Boolean programs generated by the
predicate abstraction tool SATABS (version 3.2) \cite{CKSY05} from
sequential C programs.  In our experience, we found that for C
programs with multiple procedures, SATABS generates (single procedure)
Boolean programs with all procedure calls inlined within the calling
procedure. Hence, we only perform intraprocedural analysis in this
version of our tool. The set of update schemas handled currently is
$\{id, \mtt{assign} \to \mtt{assign}, \mtt{assume} \to
\mtt{assume}\}$; we do not permit statement deletions. We set the
costs $c(\mtt{assign} \to \mtt{assign}, \ell)$ and $c(\mtt{assume} \to
\mtt{assume}, \ell)$  to some large number for every location $\ell$ where we wish
to disallow statement modifications, and to $1$ for all other
locations.  We initialize the tool with a repair budget of $1$.
We also provide the tool with a cut-set of locations
for its Boolean program input.

\begin{figure}[t]
\centering
\fbox{
\begin{minipage}{0.5\linewidth}
\footnotesize{
{\tt
$\mtt{handmade1}:$ \\
\\
int main() \{\\
\hspace*{3mm} $\mtt{int} \; x$;\\
\hspace*{3mm} $\ell_1:$ \hspace*{0mm} $\mtt{while}$ $(x < 0)$\\
\hspace*{3mm} $\ell_2:$ \hspace*{3mm} $x \; \assign \; x + 1$;\\
\hspace*{3mm} $\ell_3:$ $\mtt{assert}$ $(x>0)$;\\
$\}$\\
}
\noindent\rule{6cm}{0.4pt}
\\
Boolean program vars/predicates:
\begin{enumerate}
\item $\conc(b0) =  x \leq 0$
\end{enumerate}
\noindent\rule{6cm}{0.4pt}
\\
Boolean program repair:
\begin{enumerate}
\item Change guard for $stmt(\ell_1)$ from
$*$ to $b0$
\end{enumerate}
\noindent\rule{6cm}{0.4pt}
\\
Concrete program repair:
\begin{enumerate}
\item Change guard for $stmt(\ell_1)$ to
$x \leq 0$
\end{enumerate}
}
\end{minipage}
}
\caption{Repairing program $\mtt{handmade1}$}
\label{fig:seq.pred.ex1}
\end{figure}

\begin{figure}[!htb]
\centering
\fbox{
\begin{minipage}{0.65\linewidth}
\footnotesize{
{\tt
$\mtt{handmade2}:$ \\
\\
int main() \{\\
\hspace*{3mm} $\mtt{int} \; x$;\\
\hspace*{3mm} $\ell_1:$ $\mtt{if}$ $(x\leq0)$\\
\hspace*{3mm} $\ell_2:$ \hspace*{3mm} $\mtt{while}$ $(x < 0)$\{\\
\hspace*{3mm} $\ell_3:$ \hspace*{6mm} $x \; \assign \; x + 2$;\\
\hspace*{3mm} $\ell_4:$ \hspace*{6mm} $\mtt{skip}$;\\
\hspace*{15mm} $\}$\\
\hspace*{10mm} $\mtt{else}$\\
\hspace*{3mm} $\ell_5:$ \hspace*{3mm} $\mtt{if}$ $(x==1)$\\
\hspace*{3mm} $\ell_6:$ \hspace*{6mm} $x \; \assign \; x - 1$;\\
\hspace*{3mm} $\ell_7:$ $\mtt{assert}$ $(x>1)$;\\
$\}$\\
}
\noindent\rule{7.5cm}{0.4pt}
\\
Boolean program vars/predicates:
\begin{enumerate}
\item $\conc(b0) \; =  \; x \leq 1$
\item $\conc(b1) \; = \; x == 1$ 
\item $\conc(b2) \; = \;  x \leq 0$
\end{enumerate}
\noindent\rule{7.5cm}{0.4pt}
\\
Boolean program repair:
\begin{enumerate}
\item Change guard for $stmt(\ell_1)$
from $b2$ to $b0 \vee b1 \vee \neg b2$
\item Change guard for $stmt(\ell_2)$
from $*$ to $b0 \vee b1 \vee b2$
\end{enumerate}
\noindent\rule{7.5cm}{0.4pt}
\\
Concrete program repair:
\begin{enumerate}
\item Change guard for $stmt(\ell_1)$ to $\true$
\item Change guard for $stmt(\ell_2)$ to $x \leq 1$
\end{enumerate}
}
\end{minipage}
}
\caption{Repairing program $\mtt{handmade2}$}
\label{fig:seq.pred.ex2}
\end{figure}

Given the above, the tool automatically generates an SMT query
corresponding to the inner $\forall$-formula in (\ref{PartCorrect}).
When generating this repairability query, for update schemas involving expression
modifications, we stipulate every deterministic Boolean expression
$g$ be modified into an {\em unknown} deterministic Boolean expression
$f$ (as described in \figref{updef}), and every
nondeterministic Boolean expression be modified into an unknown
nondeterministic expression of the form $\mtt{choose}(f_1,f_2)$.
The SMT query is then fed to the SMT-solver Z3 (version 4.3.1)
\cite{MouBjo08}. The solver either declares the formula to be
satisfiable, and provides models for all the unknowns, or declares the
formula to be unsatisfiable. In the latter case, we can choose to
increase the repair budget by $1$, and repeat the process.

\begin{figure}[!htb]
\centering
\fbox{
\begin{minipage}{0.65\linewidth}
\footnotesize{
{\tt
$\mtt{necex6}:$\\
\\
$\mtt{int} \; x,y$;\\
\\
int foo(int $*ptr$) $\{$\\
\hspace*{3mm} $\ell_4:$ if ($ptr == \&x$) \\
\hspace*{3mm} $\ell_5:$ \hspace*{3mm} $*ptr \; \assign \; 0$;\\
\hspace*{3mm} $\ell_6:$ if ($ptr == \&y$)\\
\hspace*{3mm} $\ell_7:$ \hspace*{3mm} $*ptr \; \assign \; 1$;\\
\hspace*{3mm} return $1$;\\
$\}$\\
\\
int main() $\{$\\
\hspace*{3mm} $\ell_1:$ foo ($\&x$);\\
\hspace*{3mm} $\ell_2:$ foo ($\&y$);\\
\hspace*{3mm} $\ell_3:$ assert ($x>y$);\\
$\}$\\
}
\noindent\rule{7.5cm}{0.4pt}
\\
Boolean program vars/predicates:
\begin{enumerate}
\item $\conc(b0) \; =  \; y < x$
\item $\conc(b1) \; = \; \&y == ptr$ 
\item $\conc(b2) \; = \; \&x == ptr$
\end{enumerate}
\noindent\rule{7.5cm}{0.4pt}
\\
Boolean program repair:
\begin{enumerate}
\item Change $stmt(\ell_7)$ from
$b0 \; \assign \; *$ to
$b0 \; \assign \; b0 \vee b1 \vee b2$
\end{enumerate}
\noindent\rule{7.5cm}{0.4pt}
\\
Concrete program repair:
\begin{enumerate}
\item Change $stmt(\ell_7)$ to
$*ptr \; \assign \; -1$
\end{enumerate}
}
\end{minipage}
}
\caption{Repairing program $\mtt{necex6}$}
\label{fig:seq.pred.ex3}
\end{figure}

Once the solver provides models for all the unknowns, we can extract a
repaired Boolean program. Currently, the next step --- concretization
--- is only partly automated. For assignment statements, we manually
formulate SMT queries corresponding to the inner $\forall$-formula in
(\ref{conc}), and feed these queries to Z3. If the relevant queries
are found to be satisfiable, we can obtain a repaired C program. If
the queries are unsatisfiable, we attempt template-based
concretization using linear-arithmetic templates. We manually
formulate SMT queries corresponding to the inner $\forall$-formulas in
(\ref{tempconcassume}) and (\ref{tempconc}), and call Z3. In some of
our experiments, we allowed ourselves a degree of flexibility in
guiding the solver to choose the right template parameters.

In \figref{seq.pred.ex1}, \figref{seq.pred.ex2}, \figref{seq.pred.ex3} and
\figref{seq.pred.ex4}, we present some of the details of repairing four C
programs.  The first two programs are handmade, with the second one being the
same as the one shown in \figref{runexPBTG}.  The next two programs are
mutations of two programs drawn from the NEC Laboratories Static Analysis
Benchmarks \cite{NECBench}.

\begin{figure}[h]
\centering
\fbox{
\begin{minipage}{0.7\linewidth}
\footnotesize{
{\tt
$\mtt{necex14}:$\\
\\
int main() $\{$\\
\hspace*{3mm} int $x,y$;\\
\hspace*{3mm} int $a[10]$;\\
\hspace*{3mm} $\ell_1:$ $x \; \assign \; 1U$;\\
\hspace*{3mm} $\ell_2:$ while ($x \leq 10U$) $\{$\\
\hspace*{3mm} $\ell_3:$ \hspace*{3mm} $y \; \assign \; 11-x$;\\
\hspace*{3mm} $\ell_4:$ \hspace*{3mm} assert ($y \geq 0 \; \wedge \;
y<10$);\\
\hspace*{3mm} $\ell_5:$ \hspace*{3mm} $a[y] \; \assign \; -1$;\\
\hspace*{3mm} $\ell_6:$ \hspace*{3mm} $x \; \assign \; x + 1$;\\
\hspace*{9mm} $\}$\\
$\}$\\
}
\noindent\rule{8cm}{0.4pt}
\\
Boolean program vars/predicates:
\begin{enumerate}
\item $\conc(b0) \; =  \; y < 0 $
\item $\conc(b1) \; =  \; y < 10 $
\end{enumerate}
\noindent\rule{8cm}{0.4pt}
\\
Boolean program repair:
\begin{enumerate}
\item Change $stmt(\ell_3)$ from $b0,b1 \; \assign \; *,*$ to 
$b0,b1 \; \assign \; \mtt{F},\mtt{T}$
\end{enumerate}
\noindent\rule{8cm}{0.4pt}
\\
Concrete program repair:
\begin{enumerate}
\item Change $stmt(\ell_3)$ to $y \; \assign \; 10 - x$
\end{enumerate}
}
\end{minipage}
}
\caption{Repairing program $\mtt{necex14}$}
\label{fig:seq.pred.ex4}
\end{figure}

We emphasize that the repairs for the respective Boolean programs (not
shown here due to lack of space) are obtained automatically.  The
concretization of the repaired Boolean program in
\figref{seq.pred.ex1} was trivial -- it only involved concretizing the
guard $b0$ corresponding to the statement at location $\ell_1$.
Concretization of the repaired Boolean program in
\figref{seq.pred.ex2} involved concretizing two different guards, $b0
\vee b1 \vee \neg b2$ and $b0 \vee b1 \vee b2$, corresponding to the
statements at locations $\ell_1$ and $\ell_2$, respectively.  We
manually simplified the concretized guards to obtain the concrete
guards $\true$ and $x \leq 1$, respectively.  Concretization of the
repaired Boolean program in \figref{seq.pred.ex3} involved
concretizing the assignment statement at location $\ell_7$.  We
manually formulated an SMT query corresponding to the formula in
(\ref{conc}), after simplifying $\conc(b_0 \vee b_1 \vee b_2)$ to $y <
x$ and restricting the LHS of $stmt(\ell_7)$ in the concrete program
to remain unchanged.  The query was found to be satisfiable, and
yielded $-1$ as the RHS of the assignment statement in the concrete
program.  We repeated the above exercise to concretize the assignment
statement at location $\ell_3$ in \figref{seq.pred.ex4}, and obtained
$y \; \assign \; 0$ as the repair for the concrete program.
Unsatisfied by this repair, we formulated another SMT query
corresponding to the formula in (\ref{tempconc}), restricting the RHS
of $stmt(\ell_3)$ to the template $-x + c$, where $c$ is unknown. The
query was found to be satisfiable, and yielded $c = 10$.

In \tabref{results}, we present the results of repairing the above four programs 
and some benchmark programs from the 2014 Competition on
Software Verification \cite{SVComp14}.  The complexity of the programs from
\cite{SVComp14} stems from nondeterministic assignments and function
invocations within loops.  All experiments were run on the same machine, an
Intel Dual Core 2.13GHz Unix desktop with 4 GB of RAM.

We enumerate the time taken for each individual step involved in generating a
repaired Boolean program.  The columns labeled LoC($\cP$) and LoC($\cB$)
enumerate the number of lines of code in the original C program and the Boolean
program generated by SATABS, respectively. The column labeled $\Var(\cB)$
enumerates the number of variables in each Boolean program. The column
$\cB$-time enumerates the time taken by SATABS to generate each Boolean
program, the column Que-time enumerates the time taken by our tool to generate
each repairability query and the column Sol-time enumerates the time taken by
Z3 to solve the query. The columns $\#$ $\mtt{Asg}$ and $\#$ $\mtt{Asm}$ count
the number of $\mtt{assign} \to \mtt{assign}$ and $\mtt{assume} \to
\mtt{assume}$ update schemas applied, respectively, to obtain the final correct
program.

Notice that our implementation either produces a repaired program very quickly,
or fails to do so in reasonable time whenever there is a significant increase
in the number of Boolean variables, as was the case for example, in {\tt
veris.c\_NetBSD-libc\_\_loop\_true}. This is because the SMT solver might need
to search over simultaneous non-deterministic assignments to all the Boolean
variables for every assignment statement in $\cB$ in order to solve the
repairability query. For the last two programs, SATABS was the main bottleneck,
with SATABS failing to generate a Boolean program with a non-spurious
counterexample after 10 minutes. In particular, we experienced issues while using
SATABS on programs that relied heavily on character manipulation.

We emphasize that when successful, our tool can repair a diverse set of errors
in programs containing loops, multiple procedures and pointer and array
variables.  In our benchmarks, we were able to repair operators (e.g., an
incorrect conditional statement $x < 0$ was repaired to $x > 0$) and array
indices (e.g., an incorrect assignment $x \assign a[0]$ was repaired to $x
\assign a[j])$, and modify constants into program variables (e.g. an incorrect
assignment $x \assign 0$ was repaired to $x \assign d$, where $d$ was a program
variable). Also, note that for many benchmarks, the repaired programs required
multiple statement modifications.

\begin{table}[t]

\caption{Experimental results}

\scalebox{0.8}{

\centering
\begin{tabular}{|l||c|c|c|c|c|c|l|c|c}
\hline
Name & LoC($\cP$) & LoC($\cB$) & $\Var(\cB)$ & $\cB$-time &
Que-time & Sol-time & \# Asg & \# Asm \\
\hline
$\mtt{handmade1}$ & 6 & 58 & 1 & 0.180s & 0.009s & 0.012s & 0 & 1
\\
$\mtt{handmade2}$ & 16 & 53 & 3 & 0.304s & 0.040s & 0.076s & 0 & 2
\\
\hline
$\mtt{necex6}$ & 24 &  66  & 3 & 0.288s & 0.004s &  0.148s & 1 & 0
\\
$\mtt{necex14}$ & 13 &  60 & 2 & 0.212s & 0.004s & 0.032s & 1 & 0
\\
\hline
$\mtt{while\_infinite\_loop\_1\_true}$ & 5 & 33 & 1 & 0.196s & 0.002s & 0.008s & 0 & 1
\\
$\mtt{array\_true}$ & 23 & 57 & 4 & 0.384s & 0.004s & 0.116s & 1 & 1
\\
$\mtt{n.c11\_true}$ & 27 & 50 & 2 & 0.204s & 0.002s & 0.024s & 1 & 0
\\
$\mtt{terminator\_03\_true}$ & 22 & 38 & 2 & 0.224s & 0.004s & 0.036s & 1 & 1
\\
$\mtt{trex03\_true}$ & 23 & 58 & 3 & 0.224s & 0.036s & 0.540s & 1 & 1
\\
$\mtt{trex04\_true}$ & 29 & 36 & 1 & 0.200s & 0.004s & 0.004s & 2 & 0
\\
$\mtt{veris.c\_NetBSD-libc\_\_loop\_true}$ & 30 & 144 & 23 & 3.856s & - & - & - & -
\\
$\mtt{vogal\_true}$ & 41 & - & - & $>10m$ & - & - & - & -
\\
$\mtt{count\_up\_down\_true}$ & 18 & - & - & $>10m$ & - & - & - & -
\\
\hline
\end{tabular}
}
\vspace{10pt}
\label{tab:results}
\end{table}

\section{Discussion}\label{sec:exten}
While the algorithm presented in this paper separates the computation of a
repaired Boolean program $\wcB$ from its concretization to obtain $\wcP$, this
separation is not necessary.  In fact, the separation may be sub-optimal - it
may not be possible to concretize all modified statements of a computed $\wcB$,
while there may indeed exist some other concretizable $\wcB$.  The solution is
to directly search for $\wcB$ such that all modified statements of $\wcB$ are
concretizable. This can be done by combining the constraints presented in
\secref{algo2} with the one in (\ref{PartCorrect}). In particular,
the set $Unknown$ in (\ref{PartCorrect}) can be modified to include unknown
expressions/template parameters needed in the formulas in
\secref{algo2}, and $\CRC(\pi)$ can be modified to include the
inner quantifier-free constraints in the formulas in
\secref{algo2}.

As noted in \secref{intro}, we can target total correctness of the repaired
programs by associating ranking functions along with inductive assertions
with each cut-point in $\Lambda$, and including termination conditions as part
of the constraints.

Finally, we wish to explore ways to ensure that the repaired program does not
unnecessarily restrict correct behaviors of the original program. We
conjecture that this can be done by computing the weakest possible set of
inductive assertions and a least restrictive $\wcB$.



\bibliographystyle{splncs03}
\bibliography{myrefs}


\normalsize


\end{document}